\newcommand{\blind}{0}
\newtheorem{theorem}{Theorem}
\newtheorem{lemma}{Lemma}
\newtheorem{corollary}{Corollary}
\begin{document}

\def\spacingset#1{\renewcommand{\baselinestretch}%
{#1}\small\normalsize} \spacingset{1}


\if0\blind
{
  \title{\bf  The Cauchy Combination Test under Arbitrary Dependence Structures}
  \author{Mingya Long\\
   Academy of Mathematics and Systems Science\\  Chinese Academy of Sciences, University  of Chinese Academy of Sciences\\
    Zhengbang Li\\
    Central China Normal University\\
    Wei Zhang\\
    Academy of Mathematics and Systems Science\\  Chinese Academy of Sciences, University  of Chinese Academy of Sciences\\
    and \\
    Qizhai Li \thanks{
    Corresponding author: liqz@amss.ac.cn. }\hspace{.2cm}\\
    Academy of Mathematics and Systems Science\\  Chinese Academy of Sciences, University  of Chinese Academy of Sciences\\
    \\
    {\emph{ Accepted by The American Statistician.}} }
    \maketitle
} \fi

\if1\blind
{
  \bigskip
  \bigskip
  \begin{center}
    {\LARGE\bf  The Cauchy Combination Test under Arbitrary Dependence Structures}
\end{center}
  \medskip
} \fi

\bigskip
\begin{abstract}
Combining individual $p$-values to perform an overall test is often encountered in statistical applications. The Cauchy combination test (CCT) ({\em J Am Stat Assoc}, 2020, 115(529): 393-402) is a powerful and computationally efficient approach to integrate individual $p$-values under arbitrary dependence structures for sparse signals. We revisit this test to additionally show that (i) the tail probability of the CCT can be approximated just as well when more relaxed assumptions are imposed on individual $p$-values compared to those of the original test statistics;
(ii) such assumptions are satisfied by six popular copula distributions;
and (iii) the power of the CCT is no less than that of the minimum $p$-value test when the number of $p$-values goes to infinity under some regularity conditions.
These findings are confirmed by both simulations and applications in two real datasets, thus further broadening the theory and applications of the CCT.
\end{abstract}

\noindent%
{\it Keywords:} Cauchy distribution; Cauchy combination test; Copula; $p$-value combination; Sparse signals.

\spacingset{2}

\bigskip

\section{Introduction}
\label{sec:intro}

Combining individual $p$-values to perform an overall test is a long-standing problem in statistics with wide-ranging applications in, for example, genetics, genomics, and functional magnetic resonance imaging. We consider $m$ hypothesis testing problems with a test statistic constructed for each one.
{Let $p_i$ be the $p$-value for the $i$-th hypothesis testing problem}, $i=1, ..., m$.
Here, we cite four well-known conventional approaches for combining $p_1, ..., p_m$,
 $-2\sum_{i=1}^m\ln p_i$  \citep{r13}, 
 $-\sum_{i=1}^m\ln(1-p_i)$  \citep{r28}, 
 $\sum_{i=1}^m\Phi^{-1}(1-p_i)$ \citep{r23},  
 and $\sum_{i=1}^m p_i$  \citep{r12}, where $\Phi(\cdot)$  is the cumulative
distribution function of a standard normal distribution. 
However, these approaches perform well only when signals are dense, i.e., when most $p_i$'s are small.

Recent high-dimensional data collected in multiple disciplines tend to have very sparse signal, with low signal-to-noise ratio.
 Therefore, the application of these conventional methods could result in substantial power loss.
  To overcome this limitation, a number of alternative tests can be applied, including,
 the Tippett's minimum $p$-value test (MINP)  \citep{r29}, 
 which will become important in the present work,
 as well as the Berk-Jones test \citep{r2}, 
 the  higher criticism test   \citep{r11}, 
 the group-combined test \citep{r16}, 
the generalized higher criticism test  \citep{r1}, 
 and the generalized Berk-Jones test \citep{r43}. 
 However, most of these tests do not provide analytic fomulas for calculating $p$-values when individual $p$-values are correlated.
 Resampling approaches, such as permutation and bootstrap procedures, can be used to handle correlated $p$-values. However, such methods are computationally impractical when it comes to large-scale data,  especially when the $p$-value of a combination test is  extremely small.

Recently, Liu and Xie (2020) 
proposed a Cauchy combination test denoted by $\text{CCT}=\sum_{i=1}^m \omega_i\tan\big((0.5-p_i)\pi\big)$, where the weights $\omega_i$ are nonnegative and $\sum_{i=1}^m \omega_i=1$.
They showed that the tail probability of the CCT could be well approximated by a standard Cauchy distribution under the null hypothesis,
which specifies the bivariate normality and two mild assumptions in the high-dimensional setting.
{So far,} the CCT has been used in quite a few {real applications}. For example, Gorfine et al. (2020) 
 applied it to study right-censoring data in survival analysis,
 while McCaw et al. (2020) 
  constructed a powerful test based on the CCT for quantitative trait genetic association studies.
    Khalid et al. (2020) also 
    employed the CCT to perform inter-module communications for runtime hardware Trojan detection.
    The CCT has also been used to analyze whole-genome sequencing studies and genome-wide studies with summary statistics (Li et al., 2019; Liu et al., 2019; Bu et al., 2020; Li et al., 2020; Xu et al., 2020; Li et al., 2021;).
     %

If these assumptions in Liu and Xie (2020) are violated, the approximation of the tail distribution of the CCT with the derived Cauchy distribution might not be appropriate.
Although extensive numerical simulation studies have been conducted in the literature to investigate the feasibility of such approximation,
a theoretical justification is not available.
In this work, we revisit the theoretical underpinnings of the CCT,
 showing that the approximation of the standard Cauchy distribution for the tail probability of the CCT is still valid under a broader range of bivariate distributions, including the six popular copula  distributions. We further show that
the power of the CCT is no less than that of the MINP,
 when the number of tests goes to infinity.
These extensions broaden the theory and applications of the CCT.

This paper is organized as follows. The main results are presented in Section 2. Simulation studies are conducted in Section 3 to examine the accuracy of the tail probability  approximation.
In Section 4, data from prostate cancer and air quality studies are analyzed to further investigate the performance of the CCT and {three conventional} tests. Some discussions are given in the final section, and all technical details are provided in the Supplementary Materials.

\section{ Main Results}

 For each individual $p$-value $p_i$, let $Z_i$ be the corresponding test statistic, $i=1, ..., m$,
  $Z_i$ has zero mean and unit standard deviation under the global null hypothesis.
  Denote by $\rho_{ij}$ the correlation coefficient between $Z_i$ and $Z_j$, and $i,j=1, ..., m.$  Write $\boldsymbol R=(\rho_{ij})_{m\times m}$ and assume that:
\begin{enumerate}
\item[\noindent (C1)] $(Z_i,Z_j)^\top$  follows a bivariate normal distribution for $1 \leq i<j \leq m$, where the superscript $^\top$ denotes the transpose of a matrix or a vector.
\item[\noindent (C2)] { $\lambda_{\max}(\boldsymbol R)<C_0$, where $\lambda_{\max}(\boldsymbol R)$ is the largest eigenvalue of $\boldsymbol R$ and $C_0$ is a positive constant.}
\item[\noindent (C3)] { a constant $\rho_{\max}$ exists such that} $\underset{{1\leq i < j \leq m}}{\max}|\rho_{ij}| \leq \rho_{\max}<1$.
\end{enumerate}

Under assumption (C1) with fixed $m$ or assumptions (C1) to (C3) with $m=o(t^\delta)$, $\delta\in\big(0,1/2\big)$, Liu and Xie (2020)
showed that
\begin{equation}\label{CCT1}
 \lim\limits_{t \rightarrow \infty}\frac{P\left(\text{CCT}>t\right)}{0.5-\arctan(t)/\pi}=1,
\end{equation}
where $o(t^\delta)/(t^\delta)\rightarrow 0 $ as $t \rightarrow \infty$ and the denominator is the tail probability of the standard Cauchy distribution.

{It is worth pointing out that the bivariate normal distribution assumption (C1) for $(Z_i, Z_j)^\top$ can be too stringent for real applications. Instead, $Z_i$ and $Z_j$ could have an arbitrary bivariate distribution. Moreover, assumption (C2) requires the eigenvalues of $\boldsymbol R$ be bounded by a constant, which is not always appropriate in practice}. For example, the largest eigenvalue of the common spiked correlation model can go to infinity (Johnstone, 2001; Lee et al., 2014; Zhang et al., 2020; Shi et al., 2022) (see details of the spiked correlation model in Section 3).

\subsection{ Tail null distribution of the  CCT for fixed $m$}

Instead of assumption (C1), we consider the following less stringent distribution assumption on the individual $p$-values $p_1, ..., p_m$.

\begin{enumerate}
\item[\noindent (D1)]For $1\leq i<j\leq m$, as $t \rightarrow \infty$, there exists a sequence of $\delta_t$ , with $\lim\limits_{t \rightarrow \infty} \delta_t \rightarrow 0$, and the $\lim\limits_{t \rightarrow \infty} \delta_t t \rightarrow \infty$, such that
$$ P\Big( 0 < p_i < \frac{\omega_i}{\pi} \frac{m}{t},~
0<p_j < \frac{\omega_j}{\pi} \frac{m}{\delta_{t}t}
\Big)= o\Big(\frac1{t}\Big)$$
{and }$$  P \Big( 0 < p_i < \frac{\omega_i}{\pi } \frac{m}{(1+\delta_{t})t},~
1-\frac{\omega_j}{\pi} \frac{m}{\delta_{t}t} < p_j < 1
\Big)=o\Big(\frac1{t}\Big).$$

\end{enumerate}

Under assumption (D1), we can show that the standard Cauchy approximation still holds in the following theorem, the proof of which is given in the Supplementary Materials.
\begin{theorem}\label{th1}
The approximation given by (1) still holds if $p_i$ follows the uniform distribution on $[0,1]$ and assumption (D1) is satisfied.
\end{theorem}

Assumption (D1) imposes no restriction on the type of  joint distribution of $Z_i$ and $Z_j$, $i,j=1, ..., m$.
An arbitrary bivariate distribution (including but not limited to the bivariate normal distribution) for $(T_i, T_j)^\top$ is allowed under assumption (D1).
{To make this case, we give the following six bivariate copula functions which are widely used in applications including finance (Genest and Mackay, 1986; Meyer, 2013) and survival analysis (Geerdens et al., 2018):}

1) Product Copula: $$\mathbb{C}(u_i,v_j) = u_i v_j,\;  \; 1\leq i \neq j\leq m.$$

2) Farlie-Gumbel-Morgenstern (FGM) Copula: $$\mathbb{C}(u_i,v_j) = u_i v_j\{1+\theta(1-u_i)(1-v_j)\}, \; \;\theta \in [-1,1], \; 1\leq i \neq j\leq m.$$

3) Cuadras-Aug\'{e} Copula: $$\mathbb{C}(u_i,v_j) = \big\{\min{ (u_i, v_j)}\big\}^{\theta}(u_i v_j)^{1-\theta}, \;  \;\theta \in [0,1], \; 1\leq i \neq j\leq m.$$

4) Normal Copula:\begin{align*}
&\mathbb{C}(u_i,v_j)=\frac{1}{2\pi\sqrt{1-\rho_{ij}}}\int_{-\infty}^{\Phi^{-1}(u_i)}\int_{-\infty}^{\Phi^{-1}(v_j)}\exp\left(-\frac{x^2-2\rho_{ij}xy+y^2}{2(1-\rho^2_{ij})}\right)dxdy,\\ &  \; 1\leq i \neq j\leq m \; , \, \underset{{1\leq i < j \leq m}}{\max}|\rho_{ij}| \leq \rho_{\max}<1.
\end{align*}

5) Ali-Mikhail-Haq (AMH) Copula: $$\mathbb{C}(u_i,v_j) = \frac{u_iv_j}{1-\theta(1-u_i)(1-v_j)}, \; \;\theta \in [-1,1], \; 1\leq i \neq j\leq m.$$

6) Survival Copula: $$\mathbb{C}(u_i,v_j) = {u_iv_j}\exp{(-\theta \ln u_i \ln v_j)},\;  \; \theta \in [0,1], \; 1\leq i \neq j\leq m.$$

Let the joint distribution of $p_i$ and $p_j$ be modelled by one of the six copula functions described above.
We can show the following result, with the proof being given in the Supplementary Materials.

{ \begin{theorem}\label{th2}
Assumption (D1) is satisfied  for the above six types of copula functions. Therefore, the approximation given by (1) still holds under these copula functions.
\end{theorem}}

%

\subsection{Tail null distribution of the CCT for divergent $m$}

Next, in order to relax assumptions (C1) to (C3), we establish the theory of the tail null distribution of the CCT for divergent $m$. The following assumption is needed.

\begin{enumerate}
\item[\noindent (D2)]For $1\leq i<j\leq m$, as $t \rightarrow \infty$, there exists a sequence of $\delta_t$ , with $\lim\limits_{t \rightarrow \infty} \delta_t \rightarrow 0$, and the $\lim\limits_{t \rightarrow \infty} \delta_t t \rightarrow \infty$, such that
$$ \sup_{1\leq i<j\leq m} P\Big( 0 < p_i < \frac{\omega_i}{\pi} \frac{m}{t},~
0<p_j < \frac{\omega_j}{\pi} \frac{m}{\delta_{t}t}
\Big)= o\Big(\frac1{t^{1+\gamma}}\Big)$$
{and }
$$ \sup_{1\leq i<j\leq m} P \Big( 0 < p_i < \frac{\omega_i}{\pi } \frac{m}{(1+\delta_{t})t},~
1-\frac{\omega_j}{\pi} \frac{m}{\delta_{t}t} < p_j < 1
\Big)= o\Big(\frac1{t^{1+\gamma}}\Big),$$
\end{enumerate}
where $0<\gamma\leq1$.

\begin{theorem}\label{th3}
The approximation given by (1) still holds if $p_i$ follows the uniform distribution on $[0,1]$, $m=o(t^{\gamma/2})$, and assumption (D2) is satisfied.
\end{theorem}

The proof of Theorem 3 is given in the Supplementary Materials.
Similar to the case of fixed $m$,
there is no restriction on the joint distribution of $Z_i$ an $Z_j$, as long as assumption (D2) is satisfied. Furthermore, we can show the following theorem.
{ \begin{theorem}\label{th4}
Assumption (D2) is satisfied  for the above six types of copula functions. Therefore, the approximation given by (1) still holds under these copula functions.
\end{theorem}}

{The proof for this theorem is given in the Supplementary Materials.
As noted earlier, according to Liu and Xie (2020), the conclusion in (1) may not hold for a spiked model,
 which violates assumption (C2), since the largest eigenvalue is not a constant.
However, assumption (D2) allows the use of the spiked model.
Therefore, approximate (1) is still valid for the spiked model.
Simulations results shown later can confirm this.}

\subsection{Power comparison between the CCT and the MINP}

 Previously, we stated that the power of the CCT is no less than that of the minimum $p$-value test when the number of $p$-values goes to infinity under some regularity conditions in the introduction. Accordingly, the MINP is equivalent to such maximum $p$-value test (hereinafter denoted as MAX),
 where $\text{MAX}=\max\left\{ Z_1^2, ..., Z_m^2\right\}$.
 {In the following, we compare the performance of the CCT and MAX.} By following the theoretical settings of Donoho and Jin (2004) 
 and Liu and Xie (2020), 
  we assume that
$(Z_1, ...,Z_m)^\top\sim N_m(\boldsymbol \mu,\boldsymbol R)$, where $\boldsymbol \mu=(\mu_1, ...,\mu_m)^\top$.
 Throughout this work, $Z_i$ has had unit standard deviation, which means that the $\boldsymbol R$ is the  correlation matrix.  The hypothesis testing problem is $H_0:\boldsymbol \mu={\boldsymbol 0}_m$ versus $H_1: \boldsymbol \mu\neq{\boldsymbol 0}_m$, where ${\boldsymbol 0}_m$ is the $m$-dimensional vector with all elements being zero. Under $H_1$, denote the index set of nonzero elements (signals) of $ \boldsymbol \mu$ by $\Omega=\{1\leq i \leq m:\mu_i\neq 0\}$,
  and let the total number of nonzero signals be
$\|\Omega\|=m^{\nu}$, where $\|\cdot\|$ is the cardinality of a set and the parameter $0<\nu<1$ measures the sparsity magnitude of signals. Similarly,  denote the number of zero signals by $\|\Omega^c\|=m^{\kappa}$, where  $0<\kappa <1$ and $\Omega^c=\{1, ...,m\}\backslash\Omega$.
 To derive the asymptotic distribution of MAX,  we first need the following assumption.
\begin{enumerate}
  \item[\noindent (D3)]Define  $\varrho_k =\underset{|i-j|\geq k}{\sup}|\rho_{ij}|, ~k=1, ..., m-1.$
Then, $\varrho_1<1~\hbox{and}~\varrho_k(\log k )^{2+d} {\rightarrow}  0~\hbox{for}~d>0~\hbox{as}~{k\rightarrow \infty}$.
\end{enumerate}

\begin{corollary}\label{cl1}
Under assumption (D3)  and for a large enough $t>0$, we have
$$\lim\limits_{m\rightarrow \infty}P\left(\frac{ \hbox{MAX}-\tilde{a}_m - O\big({(\log m)^{-1} }\big)}{\tilde{b}_m}<t \right)
=\exp\big(\exp(-t)\big),$$
where  $\tilde{a}_m=2\log {m}-\{\log{(\log{m})} + \log(4\pi) - \log 4\} + \{\log{(\log{m})}+\log(4\pi)- \log 4\}/(2 \log{m})$, $\tilde{b}_m=2-1/\log{m}$ and $O\big((\log m)^{-1} \big)/(\log m)^{-1} \rightarrow C_1$, $C_1$ is a constant.
\end{corollary}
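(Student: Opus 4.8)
The plan is to establish the Gumbel limit for $\text{MAX}=\max_i Z_i^2=\bigl(\max_i|Z_i|\bigr)^2$ by first computing the normalizing constants and the extremal limit for independent coordinates, and then transferring to the dependent case through a Gaussian comparison (Berman/Slepian-type) inequality, with the weak-dependence assumption (D3) guaranteeing that the correlations do not perturb the extreme-value behaviour. Because the stated centering and scaling match standard normal marginals, I work under $H_0:\boldsymbol\mu=\boldsymbol 0_m$ (equivalently, the $m^{\nu}$ signal coordinates form a vanishing fraction and do not affect the extremes), so each $Z_i\sim N(0,1)$.

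First I would pin down the constants. By the Mills-ratio expansion, $P(Z_i^2>u)=2\bigl(1-\Phi(\sqrt u)\bigr)=\sqrt{2/(\pi u)}\,e^{-u/2}\bigl(1+O(1/u)\bigr)$. Setting $u=\tilde a_m+\tilde b_m t$ and imposing $m\,P(Z_i^2>u)\to e^{-t}$ yields, after taking logarithms, the implicit relation $u=2\log m+\log(2/\pi)-\log u+2t+O(1/u)$. Solving this at $t=0$ by iterating (first replacing $u$ by $2\log m$ inside $\log u$, then iterating once more) reproduces exactly the stated $\tilde a_m$, while differentiating in $t$ gives $\,du/dt=2(1+1/u)^{-1}=2-1/\log m+o(1/\log m)=\tilde b_m$. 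The residual of this inversion is uniformly $O(1/\log m)$ on compact $t$-sets, which is precisely the source of the $O(1/\log m)$ shift appearing inside the probability. For independent coordinates this immediately delivers $P\bigl(\max_i|Z_i|\le\sqrt{u}\bigr)=\bigl(1-P(Z_i^2>u)\bigr)^m\to\exp(-e^{-t})$, the classical Gumbel limit.

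The core step is to remove independence. Writing $u_m=\tilde a_m+\tilde b_m t$ with $u_m\sim 2\log m$, the Gaussian comparison inequality gives
$$\Bigl|\,P\bigl(\bigcap_{i=1}^m\{|Z_i|\le\sqrt{u_m}\}\bigr)-\prod_{i=1}^m P\bigl(|Z_i|\le\sqrt{u_m}\bigr)\,\Bigr|\le C\sum_{1\le i<j\le m}|\rho_{ij}|\exp\Bigl(-\frac{u_m}{1+|\rho_{ij}|}\Bigr).$$
I would then show this bound vanishes by splitting the double sum at a cutoff $k_0=m^{\beta}$ with $\beta>0$ small. For the near-diagonal block $|i-j|<k_0$ there are $O(mk_0)$ terms, each bounded via $|\rho_{ij}|\le\varrho_1<1$ by $m^{-2/(1+\varrho_1)}$; since $2/(1+\varrho_1)>1$ this block is $O\bigl(m^{1+\beta-2/(1+\varrho_1)}\bigr)\to 0$ as soon as $\beta<(1-\varrho_1)/(1+\varrho_1)$. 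For the far block $|i-j|\ge k_0$ one uses $|\rho_{ij}|\le\varrho_{k_0}$ and $u_m\sim 2\log m$ to obtain a bound of order $\varrho_{k_0}\exp(2\varrho_{k_0}\log m)$; the second part of (D3) forces $\varrho_{k_0}(\log m)^{2+s}\to 0$, hence $\varrho_{k_0}\log m\to 0$, so this block vanishes as well. Combining both blocks shows the comparison error tends to zero, and the dependent maximum inherits the independent Gumbel limit.

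The main obstacle is the balancing in this last step: the crude per-term estimate $m^{-2/(1+\varrho_1)}$ is not summable against the $O(m^2)$ pairs whenever $\varrho_1>0$, so no single uniform bound suffices and one must genuinely exploit the decay of $\varrho_k$ through a growing cutoff $k_0\to\infty$ that is simultaneously small enough to control the near-diagonal count. The role of the strengthened condition $\varrho_k(\log k)^{2+s}\to 0$ (rather than the classical $\varrho_k\log k\to 0$) is exactly to leave enough slack to absorb the factor $\exp(2\varrho_{k_0}\log m)$ and the near-diagonal term together while keeping $k_0\to\infty$. Once the comparison error is controlled, the conclusion follows by combining it with the independent-case limit and the explicit constants derived in the first step.
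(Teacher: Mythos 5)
Your proposal is correct and takes essentially the same route as the paper: both transfer the dependent maximum to the iid case via Berman's normal comparison inequality and then use (D3) to show the correlation sum vanishes. In fact your write-up is more self-contained than the paper's at two points — you derive the norming constants $\tilde{a}_m$, $\tilde{b}_m$ explicitly by Mills-ratio inversion of $m\,P(Z_i^2>u)\to e^{-t}$ (the paper implicitly assumes the iid limit), and you spell out the block-splitting argument with the growing cutoff $k_0=m^{\beta}$, $\beta<(1-\varrho_1)/(1+\varrho_1)$, which the paper delegates to ``the technique in Berman (1964)''.
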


Deo (1972)
and Pakshirajan and Hebbar (1977)
considered $\max\left\{| Z_1|, ...,|Z_m |\right\},$  and  showed that under assumption (D3),
$$\lim\limits_{m\rightarrow \infty}P\left(\frac{\max\left\{| Z_1|, ...,|Z_m |\right\}-a_m - O\big({(\log m)^{-1} }\big)}{b_m}<t \right)
=\exp\big(\exp(-t)\big),$$
where $a_m=(2\log m)^{1/2}-{(\log \log m +4\pi -4)}/{(8\log m)^{1/2}}$ and $b_m=(2 \log m)^{-1/2}$.
We want to point out that Corollary \ref{cl1} is similar to the conclusion by Deo (1972)
and Pakshirajan and Hebbar's (1977).
However, the proof of Corrollary 1 is different.
Let
$\xi_{1-\alpha}$ and $\eta_{1-\alpha}$ be the  $1-\alpha$ quantile of the standard Cauchy distribution and the standard Gumbell distribution, respectively, i.e., $\xi_{1-\alpha}=\cot(\pi \alpha)$ and $\eta_{1-\alpha}=-\log(\log \alpha)^{-1}$. The asymptotic powers of the CCT and MAX are thus given by
 $$\beta_{\text{CCT}}=P_{H_1}(\text{CCT}>\xi_{1-\alpha})~~\text{and}~~\beta_{\text{MAX}}=P_{H_1}\bigg(\frac{\sqrt{\text{MAX}}-a_m - O\big({(\log m)^{-1} }\big)}{b_m} > \eta_{1-\alpha} \bigg).$$
We have the following result comparing the powers of the two tests.

\begin{theorem}\label{th3}
 Assumption (D3) is satisfied  and $\min\limits_{i=1, ...,m}\omega_i=O(1/m)$. Then as $m \rightarrow \infty$,
$$\beta_{\text{CCT}}\geq \beta_{\text{MAX}}+o(1).$$
\end{theorem}
When $m$ is large enough, Theorem \ref{th3} shows that the asymptotic power of the CCT is no less than that of MAX, which demonstrates the power advantage of the CCT when combining a large number of individual $p$-values.

\section{ Simulation Studies}

\subsection{ Tail probability approximation}

In this section, we conduct simulation studies to evaluate the accuracy of the tail probability approximation based on the standard Cauchy distribution (SCD). Two spiked models with equal and unequal correlation coefficients are considered.
\begin{itemize}
\item Model 1 (Unequal correlation spiked model): $\lambda_i = m/3^{i}$ for $i=1,...,d$ and $\lambda_i= 1 $ for
 $i=d+1, ...,m$, where $d=4,5$ and $6$.
\end{itemize}

  \begin{itemize}
\item Model 2 (Equal correlation spiked model): $\rho_{ij}=\rho$ for $1\leq i\neq j\leq m$ and $\rho_{ii}=1$ for $1\leq i\leq m$, where $\rho=0.2,0.5$ and $0.8$.
    \end{itemize}
\noindent
{Model 2 has a compound symmetry correlation structure, i.e., $(\rho^{Sgn(|i-j|)})_{m\times m}$ with the largest eigenvalue being $m\rho+(1-\rho)$ and all other eigenvalues being $1-\rho$, where $Sgn(|i-j|)$ is a sign function which equals 0 if $i-j=0$ and $1$ otherwise, and $\rho\in(0,1)$.
It is clear that the largest eigenvalue goes to infinity as $m\rightarrow\infty$. }

We consider $m=10$, 50 and 500. The potential test statistics $(Z_1, ...,Z_m)^\top$ are generated from an $m$-dimensional $t$ distribution $t({\bf0}_m, {\boldsymbol R})$ with mean vector ${\bf0}_m$ and correlation matrix ${\boldsymbol R}=\big(\rho_{ij}\big)_{m\times m}$.
{The marginal distributions of $Z_1, ..., Z_m$ are all set to be the univariate $t$ distribution with 10 degrees of freedom.}


\begin{figure}[htbp]
\includegraphics[width=15cm,height=12cm]{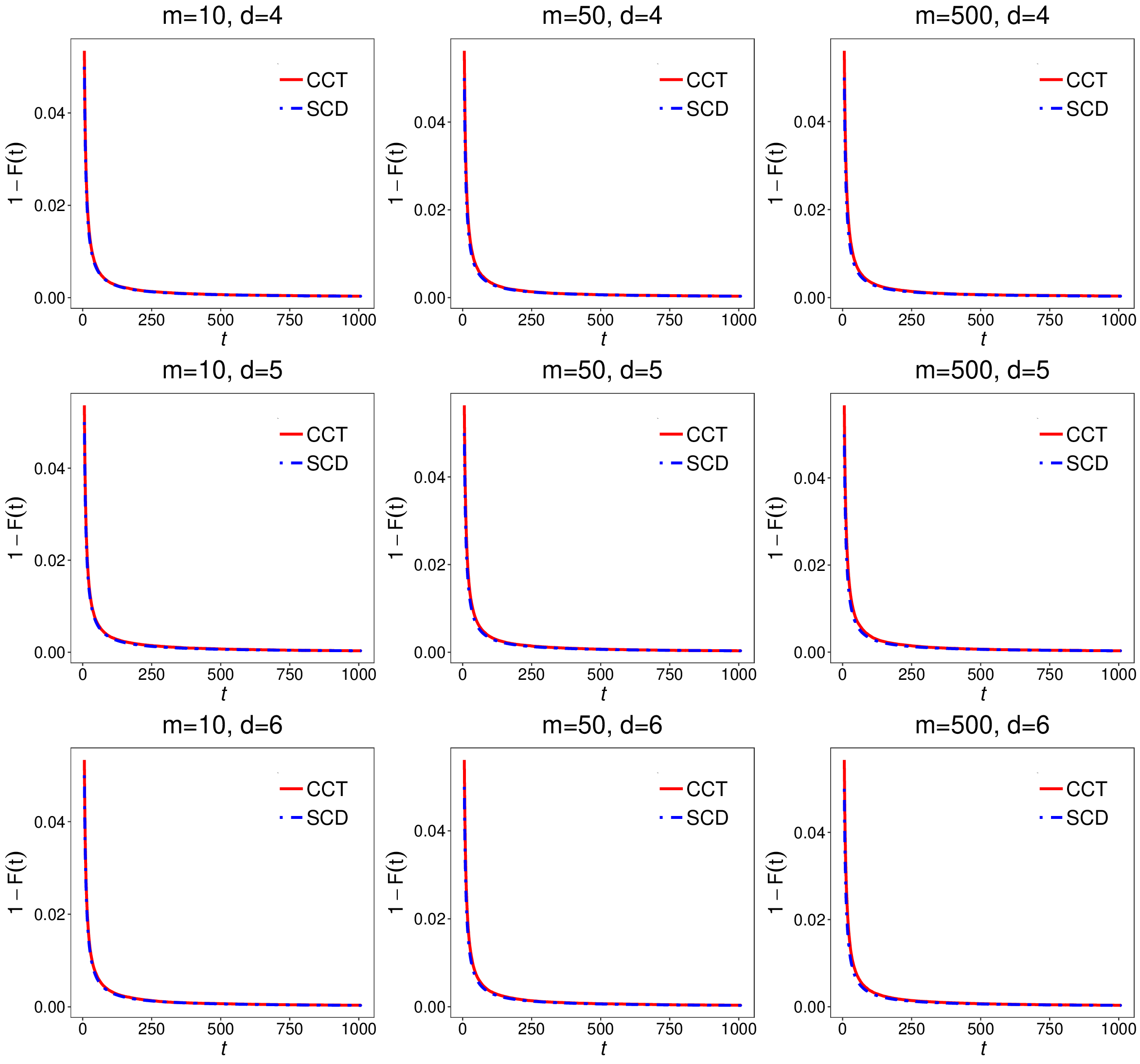}\\
{\small {\bf Figure 1.} { The tail probability of the CCT (red solid line) and the SCD (blue dotted line), where test statistics $Z_{i}, i=1, ...,m$ are generated from
an $m$-variate $t$ distribution, with parameters given in Model 1.
The  vertical axis is the tail probability $1-F(t)$ for a cumulative distribution function $F$}.
}
\end{figure}

\vspace{5mm}

The individual $p$-value is obtained as $p_i=2(1-\Psi(|Z_i|))$, where $\Psi(\cdot)$ is the cumulative distribution function of a $t$ distribution with 10 degrees of freedom.
Figure 1 displays the tail probabilities of $P(\text{SCD}>t)$ (blue dotted line)  and $P(\text{CCT}>t)$ (red solid line)  for Model 1, where  $P(\text{CCT}>t)$ is calculated based on 500,000 Monte Carlo samples. Since the exact distribution of the CCT is unknown, we use this strategy as the gold standard throughout the remainder of this work. In addition, we set  the {range} of the horizontal axis  to be the 95\% and {99.97\%}  quantiles of the standard Cauchy distribution,  which are $6.314$ and 1000, respectively.  From Figure 1, it can be seen  that the standard Cauchy distribution is a good approximation of the CCT, with both lines always coinciding with each other.
The results for Model 2 are similar and details are provided in the Supplementary Materials.


Next, we conduct simulation studies using the {AMH} copula and the FGM copula mixed with the product copula.
 We again consider $m=10$, 50 and 500.
\begin{itemize}

\item Model 3 ({ AMH} copula mixed with product copula model): $(p_i,p_{i+1})^\top\sim\mathbb{C}(u_i,v_{i+1}) = {u_iv_{i+1}}/{\big(1-\theta(1-u_i)(1-v_{i+1})\big)}$, for $i = 1, 3, ..., 2 \lfloor m/2 \rfloor - 1$, where $\lfloor m \rfloor$ {is the maximum integer} less than $m$, where $(p_i,p_{i+1})^\top\sim\mathbb{C}(u_i,v_{i+1}) = u_i v_{i+1}$ for other $i$, and  $\theta=0.2,0.5$ and $0.8$.

\item Model 4 (FGM copula mixed with product copula model): $(p_i,p_{i+1})^\top\sim\mathbb{C}(u_i,v_{i+1}) = u_i v_{i+1}\{1+\theta(1-u_i)(1-v_{i+1})\}$ for $i = 1, 3, ..., 2 \lfloor m/2 \rfloor - 1$, where $(p_i,p_{i+1})^\top\sim\mathbb{C}(u_i,v_{i+1}) = u_i v_{i+1}$ for other $i$, and $\theta=0.2,0.5$ and $0.8$.

\end{itemize}

\begin{figure}[htbp]
\includegraphics[width=15cm,height=12cm]{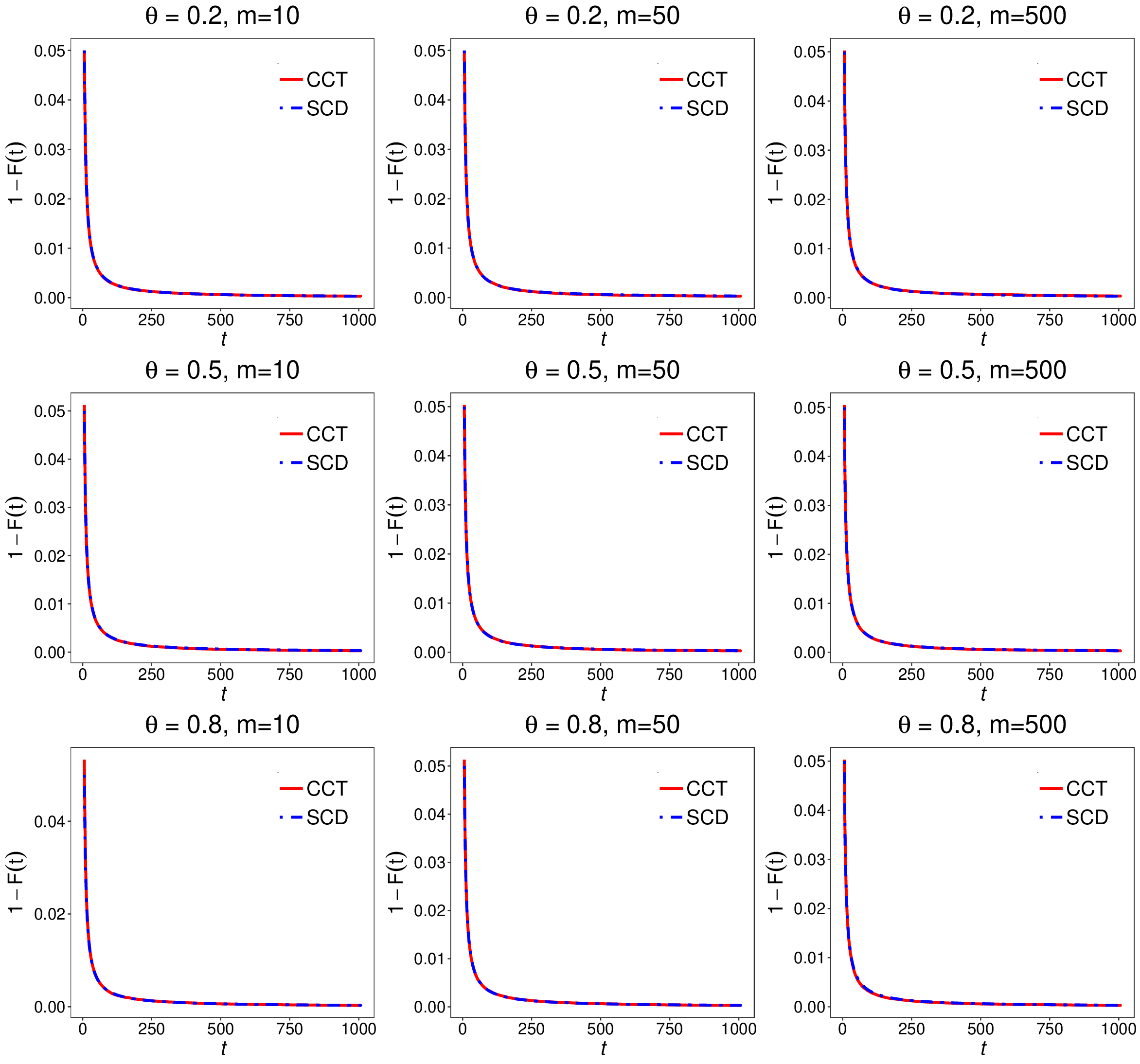}\\
{\small {\bf Figure 2.} { The tail probability of the CCT (red solid line) and the SCD (blue dotted line), where test statistics $Z_{i}, i=1, ...,m$ are generated from
an $m$-variate $t$ distribution, with parameters given in Model 3.
The  vertical axis is the tail probability $1-F(t)$ for a cumulative distribution function $F$}.
}
\label{fig4}
\end{figure}

{The $p$-values are generated based on the above two models.}
Figure 2 shows the tail probabilities of $P(\text{SCD}>t)$ and $P(\text{CCT}>t)$, where $P(\text{CCT}>t)$ is again calculated based on 500,000 {Monte Carlo} samples.
{The two lines in the figure are almost the same, indicating again that the tail probability of the CCT can be well approximated by the standard Cauchy distribution.}
Similar results are observed for Model 4 and are presented in the Supplementary Materials.

\subsection{ Power comparison}

Here, we report a simulation study comparing the power of the CCT to those of MAX,
the generalized higher criticism test (GHC) \citep{r1}
and  generalized Berk-Jones test (GBJ) \citep{r43}.  
{Theorem 5 requires that} $\varrho_1<1$ and $\varrho_k(\log k)^{2+d}{\rightarrow} 0$ for $d>0$ as ${k\rightarrow \infty}$, {which is not satisfied by Models 1 and 2 in Section 3.1. Hence,} we use two  other common models to specify the correlation matrix. The  potential test statistics $(Z_1, ...,Z_m)^\top$ are generated from an $m$-dimensional normal distribution $N({\boldsymbol{\mu}}_m, {\boldsymbol R})$ with mean vector ${\boldsymbol{\mu}}_m$  and correlation matrix ${\boldsymbol R}=\big(\rho_{ij}\big)_{m\times m}$. Matrix $\boldsymbol R$ is specified using the following two structures:
\begin{itemize}
\item Structure 1 (AR(1) correlation): $\rho_{ij}=\rho^{|i-j|}$ for $1\leq i\neq j\leq m$ and $\rho_{ii}=1$ for $1\leq i\leq m$, where $\rho=0.2,0.5$ and $0.8$.

\item Structure 2 (Polynomial decay): $\rho_{ij}=1/{(1+|i-j|^a)}$ for $1\leq i\neq j\leq m$, where $a=0.5,1.5$ and $2.5$.
\end{itemize}
We consider a sparse mean vector ${\boldsymbol{\mu}}_m$ by letting the proportion of {its nonzero elements be} $\|\Omega\|/m = 0.1, 0.2$ and $0.3$,
 {and letting all nonzero elements be equal.}  Thus, the nonzero elements are equal to  $\sqrt{3\log m}/(\|\Omega\|)^{1/3}$, which makes all powers comparable. {The simulation results are calculated based on $n=2,000$ simulation replicates,} when $m$ is chosen from $\{20, 40, 60, 80\}$, and the nominal significance level is $0.05$.

\begin{figure}[htbp]
\includegraphics[width=15cm,height=12cm]{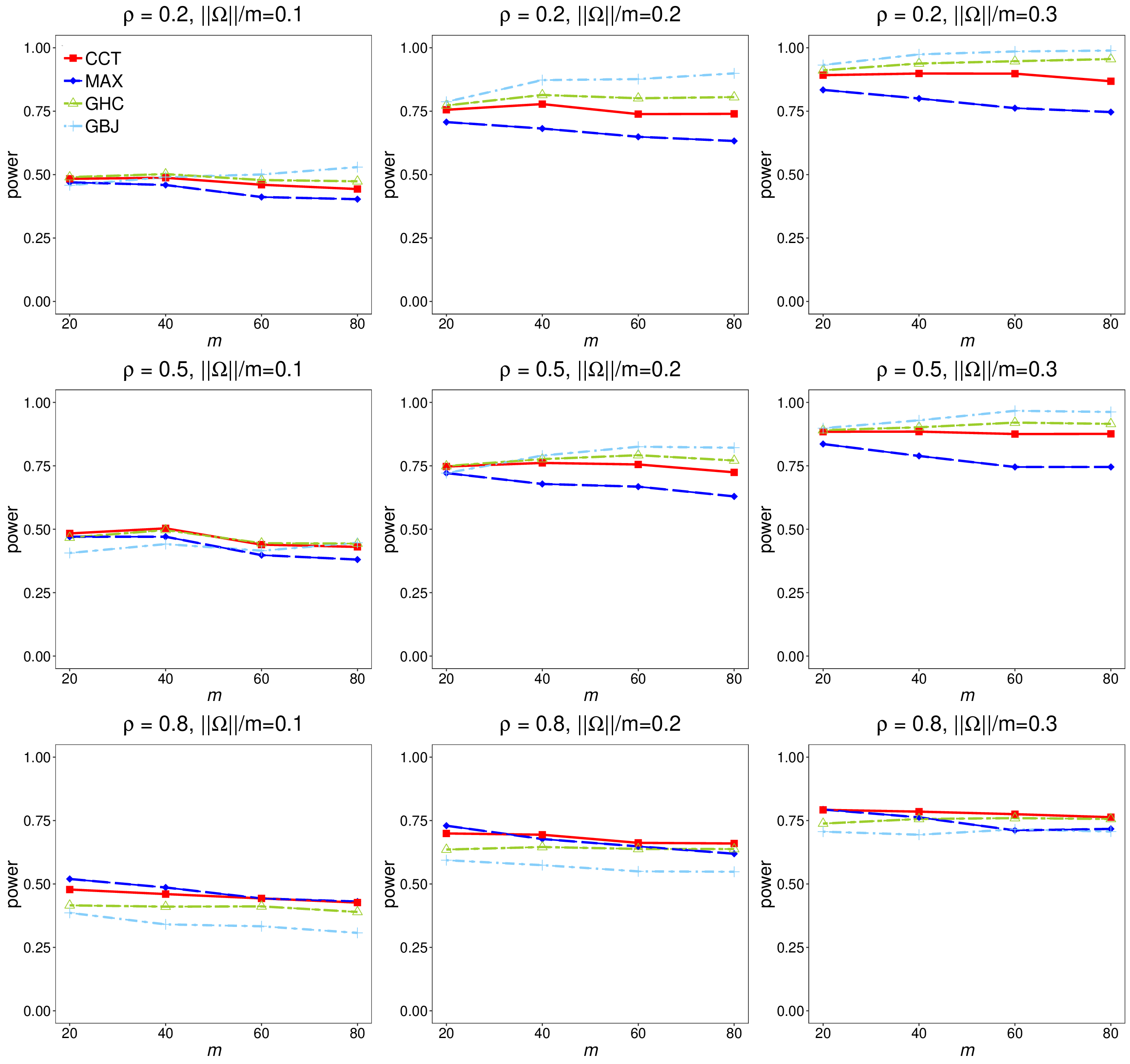}\\
{\small{\bf Figure 3.} {Empirical  powers of CCTs (Squares), MAXes (Circles), GHCs (Triangles), and GBJs (Plus signs), where the test statistics $Z_{i}, i=1, ...,m$ are generated from an $m$-dimensional normal distribution with the parameters set in Structure 1.} }
\label{fig5}
\end{figure}

\begin{figure}[htbp]
\includegraphics[width=15cm,height=12cm]{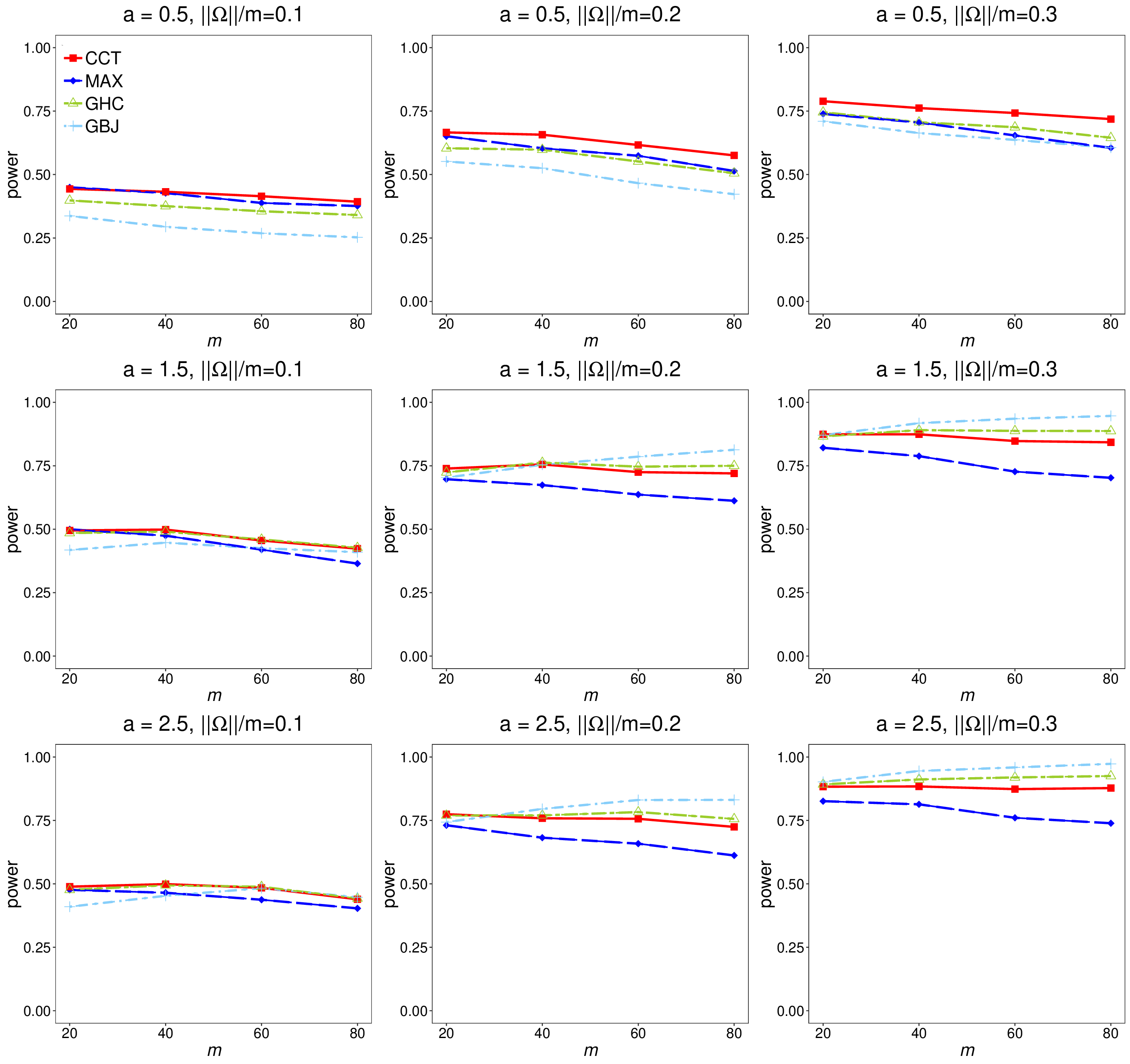}\\
{\small{\bf Figure 4.} { Empirical  powers of CCTs (Squares), MAXes (Circles), GHCs (Triangles), and GBJs (Plus signs), where the test statistics $Z_{i}, i=1, ...,m$ are generated from an $m$-dimensional normal distribution with the parameters set in Structure 2.} }
\label{fig6}
\end{figure}

 Figures 3 and 4 display empirical powers of CCT, MAX, GHC, and GBJ for Structures 1 and 2, respectively.
From these figures, we can see that the CCT has the power comparable to those of GHC and GBJ when individual statistics are weakly correlated. For example, in Figure 3, when $\rho = 0.2$, $\|\Omega\|/m = 0.1$ and $m=60$, the empirical power of the CCT is 0.4600, which is slightly smaller than that of GHC (0.4785) and  GBJ (0.5005). In contrast, when the individual statistics are strongly correlated, the CCT can achieve much higher power. For example, in Figure {4}, when $a = 0.5$, $\|\Omega\|/m = 0.3$ and $m=80$, the empirical power of the CCT is 0.7185, which is much higher than that of GHC (0.6450) and GBJ (0.6060).
The figures further show that the empirical power of MAX is always lower than that of the CCT; in some cases the CCT is about $38\%$ more powerful than MAX.

\section{ Real Data Analyses}

\subsection{A Prostate Cancer Study }
We apply MAX, GHC, GBJ, and CCT to analyze data from a prostate cancer study {aimed to} investigate whether gene expression levels in some pathways are different between tumor and normal prostate samples. The data consist of expression profiles of approximately 12,600 genes from 52 tumor   and 50 non-tumor prostate specimens {\citep{r18}}. 
 Raw data are publicly available at https://www.ncbi.nlm.nih.gov/sites/myncbi/recentactivity.
 Previous literatures show that gene expression levels between tumor and normal samples are different in some pathways,  including
the axon guidance pathway (map04360), 
the toll-like receptor signaling pathway (map04620),  
 the T cell receptor signaling pathway (map04660), 
the caffeine metabolism pathway (map00232),  
the riboflavin metabolism pathway (map00740), 
  and the neuroactive ligand-receptor interaction pathway (map04080), 
  consisting of 163, 148, 182, 10, 12, and 263 genes, respectively
  (Honda and Taniguchi, 2006; Barach et al., 2011; Shtivelman et al., 2014; Eidelman et al., 2017; Ren et al., 2018).
Here, we confine our analyses to these pathways.


The Wilcoxon test is conducted for each {gene to obtain individual $p$-values.}
{However, for a more meaningful comparison, we use the CCT and MAX to combine the individual $p$-values,
and use GHC and GBJ to combine the individual (Wilcoxon) test statistics.}
 Results on these 6 pathways are presented in Table 1.
  The {$p$-values} of MAX, GHC and GBJ are calculated based on 10,000 permutation replicates and the $p$-value for the CCT {is obtained using} the approximation formula (1).
 {Under the nominal significance level of 0.05, Table 1 shows that all four tests, except GBJ, are able to detect significant difference between tumor and normal samples. The GBJ test fails to do so for pathway map04080, with a $p$-value of 0.0514.}
\vspace{5mm}
\begin{table}[htbp]
\label{tab3}
\begin{center}
{\small {\bf Table 1.} $p$-values of MAX, GHC, GBJ, and CCT for 6 pathways of prostate cancer data.  }
\begin{tabular}{p{2.2cm}p{2.2cm}p{2.2cm}p{2.2cm}p{2.2cm}p{2.2cm} }
\hline
Pathway      &\# of genes   &  MAX     & GHC   & GBJ         & CCT       \\\hline
map04360     & ~~163     &0.0004    &0.0003   & 0.0008    & 0.0004    \\
map04620     & ~~148     &0.0145    &0.0152	  & 0.0118    & 0.0078    \\
map04660     & ~~182     &0.0053    &0.0149   & 0.0132    & 0.0079   \\
map00232     & ~~10      & 0.0220    &0.0150	  & 0.0130    & 0.0140    \\
map00740     & ~~12      &0.0167    &0.0188   & 0.0258    & 0.0146    \\
map04080     & ~~263     &0.0232    &0.0341	  & 0.0514    & 0.0117    \\\hline
\end{tabular}\\
\end{center}
\end{table}

\subsection{An Air Quality Study}
{ To further evaluate the performances of  MAX, GHC, GBJ, and CCT, we apply them to analyze data from an air quality study.
The dataset, which is publicly available at http://archive.ics.uci.edu/ml/datasets/Air+quality, contains samples of hourly averaged responses from an air quality chemical multi-sensor device {located in an Italian city}.
Five air pollutants including CO, NOx, NO2, {non-metanic hydrocarbons, and benzene} and three air quality indicators including temperature, relative humidity, and absolute humidity,
were recorded from March 2004 to February 2005 \citep{r55}. 
After deleting the samples with missing data,  827 remaining samples are used for analysis.
Our aim is to check whether the air pollutants and air quality indicators are associated.

Individual $p$-values and individual test statistics are constructed from the $F$ test of overall significance in the regression model for each air pollutant on all air quality indicators.
Each pollutant could be measured two ways. Hence, $m=10$.
We calculate the $p$-values of MAX, GHC and GBJ by using 100,000 permutations and $p$-value of the CCT using the approximation formula (1).
The $p$-values of MAX, GHC, GBJ, and CCT are {$<  10^{-6}$,  $10^{-15}$, $10^{-12}$, and $< 10^{-17}$,} respectively.
Under the nominal significance level of 0.05, all tests could successfully detect the significant association between air pollutants and air quality indicators, consistent with the conclusion of Runge (2018).}

\section{Discussion}

We began by characterizing high-dimensional data signals as sparse, i.e., having only a few small $p$-values. In the performance of an omnibus test, we proposed that individual $p$-values are often encountered in statistical applications. Many conventional approaches to combine $p$-values have been reported in the literatures, such as the MINP, the Berk-Jones test, and the higher criticism test. However, most of these tend to be computationally intensive, limiting their ability. The Cauchy combination test (CCT), on the other hand, is a powerful and computationally efficient approach to integrate individual $p$-values under arbitrary dependence structures for sparse signals. In this work, we revisited the CCT and put forth several assumptions to relax the original ones and, thus, further broaden the theory and applications of the CCT. Specifically, instead of the original test statistics, we first impose assumptions on individual $p$-values. Considering fixed $m$ or divergent $m$, the tail probability of the CCT can be approximated just as well under two assumptions: that 1) as $t$ goes to infinity, the joint probability of any two individual $p$-values is bounded by $1/t$, and 2) the joint probability of any two individual $p$-values is bounded by $1/(t^{1+r})$. Next, in order to confirm the broader application of these two assumptions, six popular and widely used copula distributions are illustrated. Finally, following the theoretical settings of Donoho and Jin (2004) and Liu and Xie (2020), we theoretically prove that the power of the CCT is no less than that of the minimum $p$-value test when the number of $p$-values goes to infinity. These findings are confirmed by both simulation and two real datasets. R codes for the simulation study and real data analyses are given in the Supplemental Materials.

In the case of sparse signals, the analytic formulas for calculating the CCT make the test useful. However, as shown in Chen (2022), the CCT may be less powerful when signals are mixed, i.e., when some $p$-values to be combined are small and others are large. Additionally, Liu and Lin (2018) pointed out that no uniformly powerful test exists for a multiple dimensional composite alternative hypothesis. These challenges call for a novel, robust and widely applicable combination method.
In this work, we have, as noted above, broadened the theoretical basis of the CCT to meet these challenges. However, some improvements are still needed. For instance,  $p$-values do not follow uniform distribution, even though appropriate in some cases. In the case of discrete data analysis, the CCT cannot make an accurate inference when $p$-values do not follow uniform distribution, thus requiring further investigation. As stated in the paper, the approximate strengths of the CCT stem from the dimension $m$, the critical threshold $t$, and the bound of distribution probability of $p$-values, so another interesting direction for future study is to derive the convergence rate of the Cauchy approximation and explore different assumptions to speed it up.

\section*{Aknowledgements}

We would like to thank the editor, the associate editor, and two anonymous reviewers for their insightful comments that have greatly improved the manuscript. We also have to thank David Martin and Kai Yu for their valuable help. Q. Li was supported in part by Beijing Natural Science Foundation (Z180006).

\section*{Conflict of Interest Statement}

The authors report there are no competing interests to declare.

\newpage
\spacingset{1}
  \section*{\bf Supplementary Materials }

{The Supplementary Materials contain three parts. In Part 1, we provide the technical details of theorems and Corollary in Section 2 of the main text. Part 2 includes the additional simulation results for Models 2 and 4 in Section 3. Part 3 gives the R codes for the numerical studies (Model 1 in the simulation study and real data analysis for Air Quality study). }

\section*{Technique Details}
\noindent To prove theorems in the manuscript, we needs three useful Lemmas.

\begin{lemma}\label{le1}
({i}) As $t \rightarrow \infty$,
$$0<0.5-\frac{\arctan(t)}{\pi} < \frac{1}{\pi t}. $$

 \noindent(ii) Let a random variable $U$ follow a Uniform distribution on $[-{\pi}/{2},{\pi}/{2}]$. Then $X=\tan(U)$ follows the standard Cauchy distribution and
$$P\left(X>t\right) = 0.5-\frac{\arctan(t)}{\pi}=\dfrac{1}{\pi t}-o\big(\frac1t\big). $$
\end{lemma}

\begin{proof}[Proof of Lemma \ref{le1}]
({i}) It can be easily shown that $$0.5-\frac{\arctan(t)}{\pi}= \frac{\arctan(1/t)}{\pi}.$$
Since $$\frac{\partial\big((1/t)-\arctan (1/t)\big)}{\partial(1/t)}=1-\frac{1}{1+(1/t)^2}>0 $$ and
$(1/t)-\arctan (1/t)\big|_{t \rightarrow \infty}  \rightarrow 0,$
we have $$\frac1t>\arctan \big(\frac1t\big).$$

({ii}) By Taylor's series expansion, we have
$$\frac{\arctan(1/t)}{\pi}= \frac{1}{\pi t} +  \frac{1}{2}{\frac{\partial^2\big(\arctan (1/t)\big)}{\pi \partial(1/t)^2}\Big|_{1/t=1/t_{\epsilon}}}\big(\frac1t\big)^2,$$
where $1/t_{\epsilon} \in (0,1/t)$ and $ \frac{1}{2}{\frac{\partial^2\big(\arctan (1/t)\big)}{\pi \partial(1/t)^2}\big|_{1/t=1/t_{\epsilon}}}\big(\frac1t\big)^2$ is an error term and {is equal} to $-o(1/t)$.

\end{proof}

\begin{lemma}\label{le2} Let $\Phi(\cdot)$ and $\phi(\cdot)$ be the cumulative distribution function and probability density function of a standard normal distribution, respectively. Then we have

\noindent(i) For any $x>0$,
\begin{center}
	$\Phi^{-1}\left(1-\dfrac{\phi(x)}{x}\right)\leq  x \leq \Phi^{-1}\left(1-\dfrac{\phi(x)}{x} \dfrac{x^{2}}{1+x^{2}}\right).$
\end{center}

\noindent(ii) For $y \rightarrow 0+ $,
\begin{center}
	$ \sqrt{\log(\frac1{y^2})+1} -1 \leq \Phi^{-1}(1-y)\leq \sqrt{\log(\frac1{y^2})}.$
\end{center}

\noindent(iii)  \citep{r7}  
 For a positive integer $m \rightarrow \infty$ and a  positive fixed number $a>0$,

\begin{center}
	$\Phi^{-1}\left(1-\dfrac{a}{m}\right)= \sqrt{2\log (m)} - \frac{\log[\log(m)]+\log(4\pi)}{\sqrt{8\log(m)}}-\frac{\log(a)}{\sqrt{2\log(m)}} +O\left(\frac{1}{\log(m)}\right).$
\end{center}

\noindent(iv) For $y \rightarrow 0+ $,
\begin{center}
	$ \Phi^{-1}(0.5 + y)= {\sqrt{2\pi}}y + o(y) \leq \pi y .$

\end{center}
\end{lemma}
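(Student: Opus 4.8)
The plan is to treat part (i) as the cornerstone, derive parts (ii) and (iv) from it together with elementary asymptotic estimates, and quote part (iii) directly from \cite{r7}. For part (i), I would start from the classical two-sided Gaussian (Mills ratio) tail bound
$$\frac{x}{1+x^2}\,\phi(x)\ \le\ 1-\Phi(x)\ \le\ \frac{\phi(x)}{x},\qquad x>0.$$
The upper bound is immediate from $1-\Phi(x)=\int_x^\infty\phi(u)\,du\le\int_x^\infty\frac{u}{x}\phi(u)\,du=\frac{\phi(x)}{x}$, using $\phi'(u)=-u\phi(u)$. For the lower bound I would set $h(x)=(1+x^2)(1-\Phi(x))-x\phi(x)$, compute $h'(x)=2x(1-\Phi(x))-2\phi(x)\le 0$ (the last step again using the upper Mills bound $x(1-\Phi(x))\le\phi(x)$), and note $h(x)\to 0$ as $x\to\infty$, so that $h\ge 0$ everywhere. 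Since $\Phi^{-1}$ is strictly increasing, rewriting the upper bound as $\Phi(x)\ge 1-\phi(x)/x$ and the lower bound as $\Phi(x)\le 1-\frac{\phi(x)}{x}\frac{x^2}{1+x^2}$ and applying $\Phi^{-1}$ to both sides yields precisely the two inequalities of (i).

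For part (ii) I would substitute $x=\Phi^{-1}(1-y)$, so that $y=1-\Phi(x)$ and $x\to\infty$ as $y\to 0+$, and feed this into the same Mills bounds. For the upper bound on $x$, the inequality $y\le\phi(x)/x=\frac{1}{x\sqrt{2\pi}}e^{-x^2/2}$ gives $y\le e^{-x^2/2}$ once $x\ge 1/\sqrt{2\pi}$, whence $x^2\le\log(1/y^2)$. For the lower bound I would use $y\ge\frac{x}{1+x^2}\phi(x)$ and reduce the target $x\ge\sqrt{\log(1/y^2)+1}-1$ to the equivalent statement $y\ge e^{-x^2/2-x}$, which in turn reduces to $x e^{x}\ge(1+x^2)\sqrt{2\pi}$; the latter holds for all large $x$. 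Taking logarithms and solving the resulting quadratic inequalities in $x$ then gives the stated bounds for all sufficiently small $y$.

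Part (iv) I would handle by Taylor expansion of $\Phi$ at the origin. Writing $w=\Phi^{-1}(0.5+y)$ so that $y=\Phi(w)-\Phi(0)=\phi(0)w+O(w^3)$ with $\phi(0)=1/\sqrt{2\pi}$, inversion gives $w=\sqrt{2\pi}\,y+o(y)$, the claimed asymptotic. The final inequality $\sqrt{2\pi}\,y+o(y)\le\pi y$ then follows because $\pi-\sqrt{2\pi}>0$ absorbs the $o(y)$ term for small $y$; alternatively one obtains $w\le\pi y$ directly from the monotone estimate $y=\int_0^w\phi(u)\,du\ge w\,\phi(w)$ together with $e^{-w^2/2}\ge\sqrt{2/\pi}$ for small $w$.

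I expect the main obstacle to be the lower bound in part (ii): unlike the clean upper bound, producing the additive corrections ``$+1$'' and ``$-1$'' in $\sqrt{\log(1/y^2)+1}-1$ forces one to retain the polynomial prefactor $\frac{x}{(1+x^2)\sqrt{2\pi}}$ rather than discard it, and to control it against $e^{-x}$. The specific constants in the statement are exactly what make the comparison $xe^{x}\ge(1+x^2)\sqrt{2\pi}$ valid in the large-$x$ regime, so the delicacy lies in verifying that inequality and confirming that all estimates hold uniformly for $y$ small enough.
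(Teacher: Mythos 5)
Your proposal is correct and follows essentially the same route as the paper: part (i) via the Mills-ratio bounds (which you prove from scratch rather than citing Birnbaum), part (ii) by taking logarithms of those bounds and absorbing the polynomial prefactor through $x e^{x}\ge (1+x^2)\sqrt{2\pi}$ --- exactly the paper's inequality $\log y \ge -x^2/2 - x$, part (iii) quoted from the reference, and part (iv) by Taylor expansion at the origin. One small point in your favor: your expansion correctly uses $(\Phi^{-1})'(0.5)=1/\phi(0)=\sqrt{2\pi}$, whereas the paper's proof of (iv) misstates this derivative as $\frac{1}{\sqrt{2\pi}}$ (a typo, since the lemma statement itself carries the correct constant $\sqrt{2\pi}$).
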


\begin{proof}[Proof of Lemma \ref{le2}]
(i) According to the monotone increasing property of $\Phi(\cdot)$  and the Mill's ratio inequality  \citep{r6},  
 $$\Phi^{-1}\left(1-\dfrac{\phi(x)}{x}\right)\leq  x \leq \Phi^{-1}\left(1-\dfrac{\phi(x)}{x}\times\dfrac{x^{2}}{1+x^{2}}\right).$$

(ii) It can be verified that when $x \rightarrow \infty $ , $x \leq \Phi^{-1}(1-y)$ and
 $$ \log(y) = -\log(\sqrt{2\pi})-\frac{x^2}{2}-\log(x)+\log(\dfrac{x^{2}}{1+x^{2}})  \geq-\frac{x^2}{2}- x,$$
 where $y=\big({\phi(x)}/{x}\big)\times\big({x^{2}}/{(1+x^{2})}\big)$.
 Based on some algebras, we have $$x>\sqrt{\log(\frac1{y^2})+1} -1,$$ implying that $$\sqrt{\log(\frac1{y^2})+1} -1 < \Phi^{-1}(1-y).$$
  Similarly, when $x \rightarrow \infty $ ,  we have $ \Phi^{-1}(1-y)\leq  x$ and
 $$ \log(y) = -\log(\sqrt{2\pi})-\frac{x^2}{2}-\log(x)  \leq -\frac{x^2}{2} ,$$
 where $y={\phi(x)}/{x}$.
So $x<\sqrt{\log1/{y^2}}$ and $ \Phi^{-1}(1-y)< \sqrt{\log1/{y^2}}$.

 (iv) By the Taylor expansion, we have $$ \Phi^{-1}(0.5 + y)= \Phi^{-1}(0.5) + \frac{1}{\sqrt{2\pi}}y + o(y).$$
As $y \rightarrow 0+ $, it follows that $ \Phi^{-1}(0.5 + y)\leq \pi y$.

\end{proof}

\begin{lemma}\label{le3}
 \citep{r3} 
Let $(X,Y)^\top$ be a bivariate normally distributed random variable with $EX=EY=0$, $\hbox{var}(X)=\hbox{var}(Y)=1$ and $\hbox{corr}(X,Y)=\rho$.
 Assume $|\rho|\leq \delta$ for any $\delta$ with $0< \delta <1$, then we have
$$\lim\limits_{c\rightarrow \infty}\frac{P( X>c,~Y>c)}{[2\pi(1-\rho)^{1/2}c^2]^{-1}\exp(-{c^2}{(1+\rho)^{-1}})(1+\rho)^{3/2}}=1.$$
\end{lemma}	

\vspace{5mm}
Lemmas \ref{le1} to \ref{le3} show the bounds of the tail probabilities of the standard Cauchy distribution,  standard normal distribution and
 bivariate normal distribution, respectively.  Now, we move on to prove Theorem 1.
\vspace{5mm}

\begin{proof}[Proof of Theorem 1]
\noindent Three steps are needed to complete the proof of  Theorem 1.

{\em Step 1.} We will decompose $P(\text{CCT}>t)$ into two mutually exclusive events.

Denote
  $$A_{i,t}=\big\{\tan[(0.5-p_i)\pi]>\frac{(1+\delta_{t})t}{\omega_{i}},~\text{CCT}>t\big\}$$
and
$$B_{i,t}=\big\{\tan[(0.5-p_i)\pi]\leq \frac{(1+\delta_{t})t}{\omega_{i}},~\text{CCT}>t\big\},$$
where $\omega_{i}>0$, $1\leq i\leq m$, and $\delta_{t}$ satisfies that $\delta_{t}>0,\delta_{t}\rightarrow0$ and $\delta_{t}t\rightarrow+\infty$ as $t\rightarrow+\infty$.
Let $A_{t}=\bigcup_{i=1}^{m}A_{i,t}$ and $B_{t}=\bigcap_{i=1}^{m}B_{i,t}.$
Then $\{\text{CCT}>t\}=A_{t}\cup B_{t}$. Since $A_{t}$ and $B_{t}$ are mutually exclusive, we have
\begin{center}
   $P(\text{CCT}>t)=P(A_{t})+P(B_{t}).$
\end{center}

{\em Step 2.} we will show that $P(B_{t})=o(1/t)$.

The event $\{\text{CCT}>t\}$ implies that there exists at least one $i$ such that $\tan[(0.5-p_i)\pi]>t/(\omega_{i}m)$. So,
\begin{align*}
  P(B_{t})&\leq \sum\limits_{i=1}^{m}P\big(B_{i,t}\cap\{\tan[(0.5-p_i)\pi]>\frac t{\omega_{i}m}\}\big)\\
  &=\sum\limits_{i=1}^{m}P\left(\frac{t}{\omega_{i}m}<\tan[(0.5-p_i)\pi]\leq \frac{(1+\delta_{t})t}{\omega_{i}},~\text{CCT}>t\right)\\
    &\leq \sum\limits_{i=1}^{m}P\left(\frac{t}{\omega_{i}m}<\tan[(0.5-p_i)\pi]\leq \frac{(1-\delta_{t})t}{\omega_{i}},~\text{CCT}>t\right)\\
    &+\sum\limits_{i=1}^{m}P\left(\frac{(1-\delta_{t})t}{\omega_{i}}<\tan[(0.5-p_i)\pi]\leq \frac{(1+\delta_{t})t}{\omega_{i}}\right)\\
    &\widehat=~I_{1}+I_{2}.
\end{align*}
Note that $\delta_{t}\rightarrow 0$. According to (ii) of {Lemma \ref{le1}}, we have
$$I_{2}= \frac{\omega_i}{(1-\delta_{t})\pi t}-\frac{\omega_i}{(1+\delta_{t})\pi t}+o\big(\frac 1t\big)=o\big(\frac 1t\big)$$
As for $I_{1}$, it can be shown that
\begin{align*}
I_{1} & \leq\sum_{i=1}^{m}P\Bigg(\frac{t}{\omega_{i}m}<\tan[(0.5-p_i)\pi]\leq\frac{(1-\delta_{t})t}{\omega_{i}},~
\sum_{j\neq i}^{m}\omega_{j}\tan[(0.5-p_j)\pi]>\delta_{t}t\Bigg)\\
& \leq\sum_{i=1}^{m}\sum_{j\neq {i}}^{m}P\Bigg\{\frac{t}{\omega_{i}m}<\tan[(0.5-p_i)\pi]\leq\frac{(1-\delta_{t})t}{\omega_{i}},~
\tan[(0.5-p_j)\pi]>\frac{\delta_{t}t}{(m-1)\omega_{j}}\Bigg\}.
\end{align*}
It remains to show for any $1\leq {i}\neq{j}\leq{m}$, 
\begin{align*}
I_{1,ij}&=P\Bigg(\frac{t}{\omega_{i}m}<\tan[(0.5-p_i)\pi]\leq\frac{(1-\delta_{t})t}{\omega_{i}},~
\tan[(0.5-p_j)\pi]>\frac{\delta_{t}t}{(m-1)\omega_{j}}\Bigg)\\
&\leq P\Bigg( \arctan\left( \frac{\omega_{i}}{ (1-\delta_{t})t }\right) \leq p_i <  \frac{1}{\pi} \arctan\left(\frac{\omega_{i}m}{t}\right),~
0<p_j < \frac{1}{\pi} \arctan\left(\frac{(m-1)\omega_{j}}{\delta_{t}t}\right)\Bigg)\\
&\leq P\Bigg( 0 < p_i < \frac{\omega_{i}}{\pi} \frac{m}{t},~
0<p_j < \frac{\omega_{j}}{\pi} \frac{m-1}{\delta_{t}t}
\Bigg)\\
&\leq P\Bigg( 0 < p_i < \frac{\omega_{i}}{\pi} \frac{m}{t},~
0<p_j < \frac{\omega_{j}}{\pi} \frac{m}{\delta_{t}t}
\Bigg)\\
&=o(1/t).
\end{align*}

{\em Step 3.} we will verify $P(A_{t})=1/(t\pi)+o(1/t)$.

 By the Bonferroni inequality \citep{r10},
$$\sum_{i=1}^{m}P(A_{i,t})-\sum_{1\leq{i}<j\leq{m}}P(A_{i,t}\cap A_{j,t})
\leq P(A_t)\leq\sum_{i=1}^{m}P(A_{i,t}).$$
It can be similarly obtained that
$P(A_{i,t}\cap A_{j,t})=o(1/t)$ for any $1\leq{i}<j\leq{m}$. Furthermore,
$$\begin{array}{ccl}
P(A_{i,t})&=& P\big(\tan[(0.5-p_i)\pi]> \frac{(1+\delta_{t})t}{\omega_{i}}\big)
-P\big(\tan[(0.5-p_i)\pi]>\frac{(1+\delta_{t})t}{\omega_{i}},\text{CCT}\leq{t}\big)\\
& \widehat=& J_{1,i} - J_{2,i}.
\end{array}$$
It follows that
$$\begin{array}{ccl}
J_{1,i}&=&P\big(\tan[(0.5-p_i)\pi]>\frac{(1+\delta_{t})t}{\omega_{i}}\big)\\
 &=&\displaystyle\frac{\omega_{i}}{\pi(1+\delta_{t})t}+o(\frac{1}{(1+\delta_{t})t})=\frac{\omega_{i}}{\pi t}+o(1/t)
 \end{array}$$
 and
$$\begin{array}{ccl}
J_{2,i} &=&P\Big(\tan[(0.5-p_i)\pi]>\frac{(1+\delta_{t})t}{\omega_{i}},~\text{CCT}\leq{t}\Big)\\
&\leq& P\Big(\tan[(0.5-p_i)\pi]>\frac{(1+\delta_{t})t}{\omega_{i}},~\sum\limits_{j\neq i}^{m}\omega_{j}\tan[(0.5-p_j)\pi]\leq -\delta_{t}t\Big)\\
& \leq& \sum\limits_{j\neq {i}}^{m}P\Big(\tan[(0.5-p_i)\pi]>\frac{(1+\delta_{t})t}{m\omega_{i}},~
\tan[(0.5-p_j)\pi] \leq -\frac{\delta_{t}t}{(m-1)\omega_{j}}\Big)\\
&=& \sum\limits_{j\neq {i}}^{m} J_{2,ij}.
\end{array}$$

Similar to the derivation of $I_{1,ij},$  we have $J_{2,ij}=o(1/t).$
Hence, $$P(\text{CCT}>t)=\frac{1}{\pi t} + o(\frac{1}{t}).$$
This together with Lemma 1 leads to Theorem 1.
\end{proof}

Next, we will prove that when the joint distribution of $(p_i,p_j)^\top$ follows one of six commonly used copula functions
 with fixed $m$, assumption (D1) is satisfied.
\begin{proof}[Proof of Theorem 2]
1) Product Copula. The copula function is $\mathbb{C}(u_i,v_j) = u_i v_j, 1\leq i\neq j\leq m.$\\
It can be shown that
\begin{align}
&P\Big( 0 < p_i < \frac{\omega_i}{\pi} \frac{m}{t},~0<p_j < \frac{\omega_j}{\pi} \frac{m}{\delta_{t}t}\Big)=\mathbb{C}\Big(\frac{\omega_i}{\pi} \frac{m}{t},\frac{\omega_j}{\pi} \frac{m}{\delta_{t}t} \Big)=\Big(\frac{\omega_i}{\pi} \frac{m}{t}\Big)^2 \frac{1}{\delta_{t}},
\end{align}
and
\begin{align}
&P \Big( 0 < p_i < \frac{\omega_i}{\pi } \frac{m}{(1+\delta_{t})t},~1-\frac{\omega_j}{\pi} \frac{m}{\delta_{t}t} < p_j < 1 \Big)\notag\\
&= \frac{\omega_i}{\pi } \frac{m}{(1+\delta_{t})t} - \mathbb{C}\Big(\frac{\omega_i}{\pi } \frac{m}{(1+\delta_{t})t},1-\frac{\omega_j}{\pi} \frac{m}{\delta_{t}t}\Big)\notag\\
&= \frac{\omega_i}{\pi } \frac{m}{(1+\delta_{t})t} -\frac{\omega_i}{\pi } \frac{m}{(1+\delta_{t})t} \Big(1-\frac{\omega_j}{\pi} \frac{m}{\delta_{t}t}\Big)\notag\\
&= \frac{\omega_i}{\pi } \frac{m}{(1+\delta_{t})t}\frac{\omega_j}{\pi} \frac{m}{\delta_{t}t}.
\end{align}
When $m$ is fixed, from (1), we can obtain that
$$P\Big( 0 < p_i < \frac{\omega_i}{\pi} \frac{m}{t},~0<p_j < \frac{\omega_j}{\pi} \frac{m}{\delta_{t}t}
\Big)=o(\frac1{t}),$$
and from (2), we have
$$P\Big( 0 < p_i < \frac{\omega_i}{\pi} \frac{m}{t},~0<p_j < \frac{\omega_j}{\pi} \frac{m}{\delta_{t}t}
\Big)=o(\frac1{t}).$$
2) Farlie-Gumbel-Morgenstern (FGM) Copula. The copula function is
 $$\mathbb{C}(u_i,v_j) = u_i v_j[1+\theta(1-u_i)(1-v_j)],~ \theta \in [-1,1],~ 1\leq i\neq j\leq m. $$
It can be shown that
\begin{align}
&P\Big( 0 < p_i < \frac{\omega_i}{\pi} \frac{m}{t},~0<p_j < \frac{\omega_j}{\pi} \frac{m}{\delta_{t}t}\Big)=\mathbb{C}\Big(\frac{\omega_i}{\pi} \frac{m}{t},\frac{\omega_j}{\pi} \frac{m}{\delta_{t}t} \Big)\notag\\
&=\Big(\frac{\omega_i}{\pi} \frac{m}{t}\Big)^2 \frac{1}{\delta_{t}}[1+\theta(1-\frac{\omega_i}{\pi} \frac{m}{t})(1-\frac{\omega_j}{\pi} \frac{m}{\delta_{t}t})],
\end{align}
and
\begin{align}
&P \Big( 0 < p_i < \frac{\omega_i}{\pi } \frac{m}{(1+\delta_{t})t},~1-\frac{\omega_j}{\pi} \frac{m}{\delta_{t}t} < p_j < 1 \Big)\notag\\
&= \frac{\omega_i}{\pi } \frac{m}{(1+\delta_{t})t} - \mathbb{C}\Big(\frac{\omega_i}{\pi } \frac{m}{(1+\delta_{t})t},1-\frac{\omega_j}{\pi} \frac{m}{\delta_{t}t}\Big)\notag\\
&= \frac{\omega_i}{\pi } \frac{m}{(1+\delta_{t})t} -\frac{\omega_i}{\pi } \frac{m}{(1+\delta_{t})t} \Big(1-\frac{\omega_j}{\pi} \frac{m}{\delta_{t}t}\Big)\left[1+\theta(1- \frac{\omega_i}{\pi } \frac{m}{(1+\delta_{t})t})\frac{\omega_j}{\pi} \frac{m}{\delta_{t}t}\right]\notag\\
&\leq \frac{\omega_i}{\pi } \frac{m}{(1+\delta_{t})t}\frac{\omega_j}{\pi} \frac{m}{\delta_{t}t}.
\end{align}
When $m$ is fixed, it follows from (3) that
$$P\Big( 0 < p_i < \frac{\omega_i}{\pi} \frac{m}{t},~0<p_j < \frac{\omega_j}{\pi} \frac{m}{\delta_{t}t}
\Big)=o(\frac1{t}),$$
and from (4) that
$$P\Big( 0 < p_i < \frac{\omega_i}{\pi} \frac{m}{t},~1-\frac{\omega_j}{\pi} \frac{m}{\delta_{t}t} < p_j < 1
\Big)=o(\frac1{t}).$$
3) Cuadras-Aug\'{e} Copula. The copula function is
$$\mathbb{C}(u_i,v_j) = \big(\min{ (u_i, v_j)}\big)^{\theta}(u_i v_j)^{1-\theta}, \theta \in [0,1], 1\leq i\neq j\leq m.$$
It can be shown that
\begin{align}
&P\Big( 0 < p_i < \frac{\omega_i}{\pi} \frac{m}{t},~1-\frac{\omega_j}{\pi} \frac{m}{\delta_{t}t} < p_j < 1\Big)=\mathbb{C}\Big(\frac{\omega_i}{\pi} \frac{m}{t},\frac{\omega_j}{\pi} \frac{m}{\delta_{t}t} \Big)=\Big(\frac{\omega_i}{\pi} \frac{m}{t}\Big)^{1-\theta} \frac{\omega_j}{\pi} \frac{m}{\delta_{t}t},
\end{align}
and
\begin{align}
&P \Big( 0 < p_i < \frac{\omega_i}{\pi } \frac{m}{(1+\delta_{t})t},~1-\frac{\omega_j}{\pi} \frac{m}{\delta_{t}t} < p_j < 1 \Big)\notag\\
&= \frac{\omega_i}{\pi } \frac{m}{(1+\delta_{t})t} - \mathbb{C}\Big(\frac{\omega_i}{\pi } \frac{m}{(1+\delta_{t})t},1-\frac{\omega_j}{\pi} \frac{m}{\delta_{t}t}\Big)\notag\\
&= \frac{\omega_i}{\pi } \frac{m}{(1+\delta_{t})t} -\frac{\omega_i}{\pi } \frac{m}{(1+\delta_{t})t} \Big(1-\frac{\omega_j}{\pi} \frac{m}{\delta_{t}t}\Big)^{1-\theta}\notag\\
&\leq \frac{\omega_i}{\pi } \frac{m}{(1+\delta_{t})t}\frac{\omega_j}{\pi} \frac{m}{\delta_{t}t}.
\end{align}
When $m$ is fixed, from (5), we have
$$P\Big( 0 < p_i < \frac{\omega_i}{\pi} \frac{m}{t},~0<p_j < \frac{\omega_j}{\pi} \frac{m}{\delta_{t}t}
\Big)=o(\frac1{t}),$$
and from (6), we have
$$P\Big( 0 < p_i < \frac{\omega_i}{\pi} \frac{m}{t},~1-\frac{\omega_j}{\pi} \frac{m}{\delta_{t}t} < p_j < 1
\Big)=o(\frac1{t}).$$
4) Normal Copula. The copula function is
$$\mathbb{C}(u_i,v_j) = \frac{1}{2\pi\sqrt{1-\rho_{ij}}}\int_{-\infty}^{\Phi^{-1}(u_i)}\int_{-\infty}^{\Phi^{-1}(v_j)}\exp\left(-\frac{x^2-2\rho_{ij}xy+y^2}{2(1-\rho^2_{ij})}\right)dxdy, 1\leq i\neq j\leq m.$$
It can be shown that
\begin{align*}
&P\Big( 0 < p_i < \frac{\omega_i}{\pi} \frac{m}{t},~0<p_j < \frac{\omega_j}{\pi} \frac{m}{\delta_{t}t}\Big)\\
&=\mathbb{C}\Big(\frac{\omega_i}{\pi} \frac{m}{t},\frac{\omega_j}{\pi} \frac{m}{\delta_{t}t} \Big)\\
&=\frac{1}{2\pi\sqrt{1-\rho_{ij}}}\int_{-\infty}^{\Phi^{-1}(\frac{\omega_i}{\pi} \frac{m}{t})}\int_{-\infty}^{\Phi^{-1}(\frac{\omega_j}{\pi} \frac{m}{\delta_{t}t} )}\exp\left(-\frac{x^2-2\rho_{ij}xy+y^2}{2(1-\rho^2_{ij})}\right)dxdy\\
&=\frac{1}{2\pi\sqrt{1-\rho_{ij}}}\int^{\infty}_{\Phi^{-1}(1-\frac{\omega_i}{\pi} \frac{m}{t})}\int^{\infty}_{\Phi^{-1}(1-\frac{\omega_j}{\pi} \frac{m}{\delta_{t}t} )}\exp\left(-\frac{x^2-2\rho_{ij}xy+y^2}{2(1-\rho^2_{ij})}\right)dxdy.
\end{align*}
Assume that $d_0\gg0$ satisfies the following equation
$$\frac{\exp\{-d_0^2/2\}}{d_0\sqrt{2\pi}}\frac{d_0^2}{1+d_0^2}=\sup_{i\in 1, ...,m} \frac{\omega_i m}{\pi \delta_{t} t }.$$
Through some algebras, we can obtain that $d_0 \rightarrow \infty$ as $t \rightarrow \infty$.
Then $$ d_0^2 = O\Big(\log\frac{\pi (\delta_{t} t)^2}{(\omega m)^2}\Big), \quad t \rightarrow \infty,$$
where $ \omega =\arg\sup\limits_{i\in 1, ...,m} {(\omega_i m)}/{(\pi \delta_{t} t) }.$
According to (i) in Lemma \ref{le2} and Lemma \ref{le3}, we can obtain that $\Phi^{-1}(1-{(\omega_i m)}/{(\pi t) })>d_0$, $\Phi^{-1}(1-{(\omega_j m)}/{(\pi \delta_t t) })>d_0$ and
\begin{align}
&P\Big( 0 < p_i < \frac{\omega_i}{\pi} \frac{m}{t},~0<p_j < \frac{\omega_j}{\pi} \frac{m}{\delta_{t}t}\Big)\notag\\
&\leq\frac{1}{2\pi\sqrt{1-\rho_{ij}}}\int^{\infty}_{d_0}\int^{\infty}_{d_0}\exp\left(-\frac{x^2-2\rho_{ij}xy+y^2}{2(1-\rho^2_{ij})}\right)dxdy\notag\\
&=O\left(\frac{1}{ (\delta_{t} t)^{^{\frac{2}{1+\rho_{ij}}}}\log(2\pi {(\delta_{t} t)}^2/m^2)}\right),
\end{align}
and
\begin{align}
&P \Big( 0 < p_i < \frac{\omega_i}{\pi } \frac{m}{(1+\delta_{t})t},~1-\frac{\omega_j}{\pi} \frac{m}{\delta_{t}t} < p_j < 1 \Big)\notag\\
&=\frac{1}{2\pi\sqrt{1-\rho_{ij}}}\int_{-\infty}^{\Phi^{-1}(\frac{\omega_i}{\pi} \frac{m}{t})}\int^{\infty}_{\Phi^{-1}(1-\frac{\omega_j}{\pi} \frac{m}{\delta_{t}t} )}\exp\left(-\frac{x^2-2\rho_{ij}xy+y^2}{2(1-\rho^2_{ij})}\right)dxdy\notag\\
&=\frac{1}{2\pi\sqrt{1-\rho_{ij}}}\int^{\infty}_{\Phi^{-1}(1-\frac{\omega_i}{\pi} \frac{m}{t})}\int^{\infty}_{\Phi^{-1}(1-\frac{\omega_j}{\pi} \frac{m}{\delta_{t}t} )}\exp\left(-\frac{x^2-2\rho_{ij}xy+y^2}{2(1-\rho^2_{ij})}\right)dxdy.
\end{align}
When $m$ is fixed and $0\leq\rho_{ij}<\rho_{\max}<1$, it follows from (7) that
$$P\Big( 0 < p_i < \frac{\omega_i}{\pi} \frac{m}{t},~0<p_j < \frac{\omega_j}{\pi} \frac{m}{\delta_{t}t}
\Big)=o(\frac1{t}),$$
and from (8) that
$$P\Big( 0 < p_i < \frac{\omega_i}{\pi} \frac{m}{t},~0<p_j < \frac{\omega_j}{\pi} \frac{m}{\delta_{t}t}
\Big)=o(\frac1{t}).$$
5) Ali-Mikhail-Haq (AMH) Copula.
(i) When $\theta \in [0,1]$, the copula function is $$\mathbb{C}(u_i,v_j) = \frac{u_iv_j}{1-\theta(1-u_i)(1-v_j)},~1\leq i\neq j\leq m.$$
We can show that
\begin{align}
&P\Big( 0 < p_i < \frac{\omega_i}{\pi} \frac{m}{t},~0<p_j < \frac{\omega_j}{\pi} \frac{m}{\delta_{t}t}\Big)\notag\\
&=\mathbb{C}\Big(\frac{\omega_i}{\pi} \frac{m}{t},\frac{\omega_j}{\pi} \frac{m}{\delta_{t}t} \Big)\notag\\
&\leq \Big(\frac{\omega_i}{\pi} \frac{m}{t}\Big)^2 \frac{1}{\delta_{t}(1-\theta)},
\end{align}
and
\begin{align}
&P \Big( 0 < p_i < \frac{\omega_i}{\pi } \frac{m}{(1+\delta_{t})t},~1-\frac{\omega_j}{\pi} \frac{m}{\delta_{t}t} < p_j < 1 \Big)\notag\\
&= \frac{\omega_i}{\pi } \frac{m}{(1+\delta_{t})t} - \mathbb{C}\Big(\frac{\omega_i}{\pi } \frac{m}{(1+\delta_{t})t},1-\frac{\omega_j}{\pi} \frac{m}{\delta_{t}t}\Big)\notag\\
&\leq \frac{\omega_i}{\pi } \frac{m}{(1+\delta_{t})t} -\frac{\omega_i}{\pi } \frac{m}{(1+\delta_{t})t} \Big(1-\frac{\omega_j}{\pi} \frac{m}{\delta_{t}t}\Big)\notag\\
&= \frac{\omega_i}{\pi } \frac{m}{(1+\delta_{t})t}\frac{\omega_j}{\pi} \frac{m}{\delta_{t}t}.
\end{align}
When $m$ is fixed, from (9), we have
$$P\Big( 0 < p_i < \frac{\omega_i}{\pi} \frac{m}{t},~0<p_j < \frac{\omega_j}{\pi} \frac{m}{\delta_{t}t}
\Big)=o(\frac1{t}),$$
and from (10), we have
$$P\Big( 0 < p_i < \frac{\omega_i}{\pi} \frac{m}{t},~0<p_j < \frac{\omega_j}{\pi} \frac{m}{\delta_{t}t}
\Big)=o(\frac1{t}).$$

\noindent(ii) When $\theta \in [-1,0]$, the copula function is
 $$\mathbb{C}(u_i,v_j) = \frac{u_iv_j}{1-\theta(1-u_i)(1-v_j)},~ 1\leq i\neq j\leq m. $$\\
It is easy to obtain that
\begin{align}
&P\Big( 0 < p_i < \frac{\omega_i}{\pi} \frac{m}{t},~0<p_j < \frac{\omega_j}{\pi} \frac{m}{\delta_{t}t}\Big)=\mathbb{C}\Big(\frac{\omega_i}{\pi} \frac{m}{t},\frac{\omega_j}{\pi} \frac{m}{\delta_{t}t} \Big)\leq \Big(\frac{\omega_i}{\pi} \frac{m}{t}\Big)^2 \frac{1}{\delta_{t}},
\end{align}
and
\begin{align}
&P \Big( 0 < p_i < \frac{\omega_i}{\pi } \frac{m}{(1+\delta_{t})t},~1-\frac{\omega_j}{\pi} \frac{m}{\delta_{t}t} < p_j < 1 \Big)\notag\\
&= \frac{\omega_i}{\pi } \frac{m}{(1+\delta_{t})t} - \mathbb{C}\Big(\frac{\omega_i}{\pi } \frac{m}{(1+\delta_{t})t},1-\frac{\omega_j}{\pi} \frac{m}{\delta_{t}t}\Big)\notag\\
&\leq \frac{\omega_i}{\pi } \frac{m}{(1+\delta_{t})t} -\frac{\omega_i}{2\pi } \frac{m}{(1+\delta_{t})t} \Big(1-\frac{\omega_j}{\pi} \frac{m}{\delta_{t}t}\Big)\notag\\
&= \frac{\omega_i}{\pi } \frac{m}{(1+\delta_{t})t}\frac{\omega_j}{\pi} \frac{m}{\delta_{t}t}- \frac{\omega_i}{2\pi } \frac{m}{(1+\delta_{t})t}.
\end{align}
When $m$ is fixed, from (11), we have
$$P\Big( 0 < p_i < \frac{\omega_i}{\pi} \frac{m}{t},~0<p_j < \frac{\omega_j}{\pi} \frac{m}{\delta_{t}t}
\Big)=o(\frac1{t}),$$
and from (12), we have
$$P\Big( 0 < p_i < \frac{\omega_i}{\pi} \frac{m}{t},~0<p_j < \frac{\omega_j}{\pi} \frac{m}{\delta_{t}t}
\Big)=o(\frac1{t}).$$
6) Survival Copula. The copula function is
$$\mathbb{C}(u_i,v_j) = {u_iv_j}\exp{(-\theta \ln u_i \ln v_j)},~ \theta \in [0,1],~ 1\leq i\neq j\leq m. $$
We can show that
\begin{align}
&P\Big( 0 < p_i < \frac{\omega_i}{\pi} \frac{m}{t},~0<p_j < \frac{\omega_j}{\pi} \frac{m}{\delta_{t}t}\Big)=\mathbb{C}\Big(\frac{\omega_i}{\pi} \frac{m}{t},\frac{\omega_j}{\pi} \frac{m}{\delta_{t}t} \Big)\leq \Big(\frac{\omega_i}{\pi} \frac{m}{t}\Big)^2 \frac{1}{\delta_{t}},
\end{align}
and
\begin{align}
P \Big(&0 < p_i < \frac{\omega_i}{\pi }  \frac{m}{(1+\delta_{t})t},~1-\frac{\omega_j}{\pi} \frac{m}{\delta_{t}t} < p_j < 1 \Big)\notag\\
&= \frac{\omega_i}{\pi } \frac{m}{(1+\delta_{t})t} - \mathbb{C}\Big(\frac{\omega_i}{\pi } \frac{m}{(1+\delta_{t})t},1-\frac{\omega_j}{\pi} \frac{m}{\delta_{t}t}\Big)\notag\\
&= \frac{\omega_i}{\pi } \frac{m}{(1+\delta_{t})t} -\frac{\omega_i}{\pi } \frac{m}{(1+\delta_{t})t} \Big(1-\frac{\omega_j}{\pi} \frac{m}{\delta_{t}t}\Big)\exp{\Big(-\theta \ln \big(\frac{\omega_i}{\pi } \frac{m}{(1+\delta_{t})t}\big) \ln \big(1-\frac{\omega_j}{\pi} \frac{m}{\delta_{t}t}\big) \Big)}\notag\\
&= \frac{\omega_i}{\pi } \frac{m}{(1+\delta_{t})t} -\frac{\omega_i}{\pi } \frac{m}{(1+\delta_{t})t} \Big(1-\frac{\omega_j}{\pi} \frac{m}{\delta_{t}t}\Big)\Big(1+\theta \ln \big(\frac{\omega_i}{\pi } \frac{m}{(1+\delta_{t})t}\big) \ln \big(1-\frac{\omega_j}{\pi} \frac{m}{\delta_{t}t}\big) \Big)\notag\\
&= \frac{\omega_i\omega_j m^2}{\pi^2 (1+\delta_{t})\delta_{t} t^2} \Big(1+\theta \ln \big(\frac{\omega_i m}{\pi (1+\delta_{t})t }\big) \ln \big(1-\frac{\omega_j m}{\pi\delta_{t}t} \big) \Big)\notag\\
& \quad - \frac{\omega_i m \theta}{\pi(1+\delta_{t})t} \ln \big(\frac{\omega_i m}{\pi (1+\delta_{t})t }\big) \ln \big(1-\frac{\omega_j m}{\pi\delta_{t}t} \big)
\end{align}
When $m$ is fixed, from (13), we have
$$P\Big( 0 < p_i < \frac{\omega_i}{\pi} \frac{m}{t},~0<p_j < \frac{\omega_j}{\pi} \frac{m}{\delta_{t}t}
\Big)=o(\frac1{t}).$$
and from (14), we have
$$P\Big( 0 < p_i < \frac{\omega_i}{\pi} \frac{m}{t},~0<p_j < \frac{\omega_j}{\pi} \frac{m}{\delta_{t}t}
\Big)=o(\frac1{t}).$$

\end{proof}

Next we will prove Theorem 3.
\begin{proof}[Proof of Theorem 3]
\noindent Similar to the proof of Theorem 1,  we  decompose $\{\text{CCT}>t\}$  into two mutually exclusive events $A_{t}$ and $B_{t}$, and calculate $P(A_t)$ and $P(B_t)$.

Firstly we will show $P(B_{t})=o(1/t)$.
Denote $$I_1 = \sum\limits_{i=1}^{m}P\left(\frac{t}{\omega_{i}m}<\tan[(0.5-p_i)\pi]\leq \frac{(1-\delta_{t})t}{\omega_{i}},~\text{CCT}>t\right)$$
 and $$I_2 = \sum\limits_{i=1}^{m}P\left(\frac{(1-\delta_{t})t}{\omega_{i}}<\tan[(0.5-p_i)\pi]\leq \frac{(1+\delta_{t})t}{\omega_{i}}\right).$$
Then $  P(B_{t}) \leq I_{1}+I_{2}$.
By noting that $\delta_{t}\rightarrow 0$ and using (ii) of {Lemma \ref{le2}}, we have
$$I_{2}= \frac{\omega_i}{(1-\delta_{t})\pi t}-\frac{\omega_i}{(1+\delta_{t})\pi t}+o(\frac1t)=o(\frac1t).$$
Denote $$I_{1,ij}=P\Big(\frac{t}{\omega_{i}m}<\tan[(0.5-p_i)\pi]\leq\frac{(1-\delta_{t})t}{\omega_{i}},~
\tan[(0.5-p_j)\pi]>\frac{\delta_{t}t}{(m-1)\omega_{j}}\Big),$$
we have
$$I_{1}  \leq \sum_{i=1}^{m}\sum_{j\neq {i}}^{m} I_{1,ij}.
$$
It remains to show for any $1\leq {i}\neq{j}\leq{m}$,~ $I_{1,ij}=o(1/{t^{1+\gamma}}).$
\begin{align*}
I_{1,ij}
&\leq P\Bigg( \arctan\left( \frac{\omega_{i}}{ (1-\delta_{t})t }\right) \leq p_i <  \frac{1}{\pi} \arctan\left(\frac{\omega_{i}m}{t}\right),~
0<p_j < \frac{1}{\pi} \arctan\left(\frac{(m-1)\omega_{j}}{\delta_{t}t}\right)\Bigg)\\
&\leq P\Bigg( 0 < p_i < \frac{\omega_i}{\pi} \frac{m}{t},~
0<p_j < \frac{\omega_j}{\pi} \frac{m-1}{\delta_{t}t}
\Bigg)\\
&\leq P\Bigg( 0 < p_i < \frac{\omega_i}{\pi} \frac{m}{t},~
0<p_j < \frac{\omega_j}{\pi} \frac{m}{\delta_{t}t}
\Bigg)\\
&=o\big(\frac1{t^{1+\gamma}}\big).
\end{align*}

Next, we will verify that $P(A_{t})=1/(t\pi)+o(1/t)$.

By the Bonferroni inequality  \citep{r10}, 
$$\sum_{i=1}^{m}P(A_{i,t})-\sum_{1\leq{i}<j\leq{m}}P(A_{i,t}\cap A_{j,t})
\leq P(A_t)\leq\sum_{i=1}^{m}P(A_{i,t}).$$
In this situation, we have to prove that $\sum\limits_{1\leq{i}<j\leq{m}}P(A_{i,t}\cap A_{j,t})=o(1/t)$.
\begin{align*}
P(A_{i,t}\cap A_{j,t})
&<P\Big(\tan[(0.5-p_i)\pi]>\frac{(1+\delta_{t})t}{m \omega_i},~\tan[(0.5-p_j)\pi]> \frac{(1+\delta_{t})t}{\omega_j}\Big)\\
&\leq
P\Bigg(
0<p_i < \frac{1}{\pi} \arctan
\left(
\frac{m \omega_i}{(1+\delta_{t})t}
\right),~
0<p_j < \frac{1}{\pi} \arctan
\left(
\frac{m \omega_j}{(1+\delta_{t})t}
\right)
\Bigg)\\
&\leq P\Bigg( 0 < p_i < \frac{\omega_i}{\pi} \frac{m}{t},~
0<p_j < \frac{\omega_j}{\pi} \frac{m}{\delta_{t}t}
\Bigg)\\
&=\big(\frac1{t^{1+\gamma}}\big).
\end{align*}
\noindent On the other hand,
$$\begin{array}{ccl}
P(A_{i,t})&=&P\big(\tan[(0.5-p_i)\pi]>\frac{(1+\delta_{t})t}{\omega_{i}}\big)
-P\big(\tan[(0.5-p_i)\pi]>\frac{(1+\delta_{t})t}{\omega_{i}},~\text{CCT}\leq{t}\big)\\
& \widehat= & J_{1,i} + J_{2,i},
\end{array}
$$
where
$$\begin{array}{ccl}
J_{1,i} & = & P\big(\tan[(0.5-p_i)\pi]>\frac{(1+\delta_{t})t}{\omega_{i}}\big)\\
&=&\frac{\omega_{i}}{\pi(1+\delta_{t})t}
+o(\frac{\omega_{i}}{(1+\delta_{t})t})\\
&=&\frac{\omega_{i}}{\pi t}+o(\frac{\omega_{i}}{t})
\end{array}$$
and
$$\begin{array}{ccl}
J_{2,i}& = & P\big(\tan[(0.5-p_i)\pi]>\frac{(1+\delta_{t})t}{\omega_{i}},~\text{CCT}\leq{t}\big)\\
 &\leq& P\bigg(\tan[(0.5-p_i)\pi]>\frac{(1+\delta_{t})t}{\omega_{i}m},~
\sum_{j\neq i}^{m}\omega_{j}\tan[(0.5-p_j)\pi]\leq -\delta_{t}t\bigg)\\
& \leq& \sum\limits_{j\neq {i}}^{m}P\Bigg(\tan[(0.5-p_i)\pi]>\frac{(1+\delta_{t})t}{\omega_{i}m},~
\tan[(0.5-p_j)\pi] \leq -\frac{\delta_{t}t}{(m-1)\omega_{j}}\Bigg)\\
&=& \sum\limits_{j\neq {i}}^{m} J_{2,ij}.
\end{array}$$

So  $P(B_{t})=o(1/t)$.
Similar to $I_{2,ij}$, $J_{2,ij}=o(1/t^{1+\gamma}),$ $1\leq {i}\neq{j}\leq{m}$,
Theorem 2 holds.

\end{proof}
\setcounter{equation}{1}

{
Then, we will prove that when the joint distribution of $(p_i,p_j)^\top$ follows one of  six commonly used copula functions
 with divergent $m$,    assumption (D2) is satisfied.

\begin{proof}[Proof of Theorem 4]
 1) Product Copula. The copula function is $\mathbb{C}(u_i,v_j) = u_i v_j, 1\leq i\neq j\leq m.$

When $m$ is divergent and $m=o(t^{\gamma/2})$, let $\delta_{t}=t^{\gamma-1}$, where $\gamma \in (0,1]$.
From (1), we have
$$P\Big( 0 < p_i < \frac{\omega_i}{\pi} \frac{m}{t},~0<p_j < \frac{\omega_j}{\pi} \frac{m}{\delta_{t}t}
\Big)=o\big(\frac1{t^{1+\gamma}}\big),$$
and from (2), we can get
$$P\Big( 0 < p_i < \frac{\omega_i}{\pi} \frac{m}{t},~0<p_j < \frac{\omega_j}{\pi} \frac{m}{\delta_{t}t}
\Big)=o\big(\frac1{t^{1+\gamma}}\big).$$
2) Farlie-Gumbel-Morgenstern (FGM) Copula. The copula function is
 $$\mathbb{C}(u_i,v_j) = u_i v_j\{1+\theta(1-u_i)(1-v_j)\},~ \theta \in [-1,1],~ 1\leq i\neq j\leq m. $$
When $m$ is divergent and $m=o(t^{\gamma/2})$, let $\delta_{t}=t^{\gamma-1}$, where $\gamma \in (0,1]$.
From (3), we have
$$P\Big( 0 < p_i < \frac{\omega_i}{\pi} \frac{m}{t},~0<p_j < \frac{\omega_j}{\pi} \frac{m}{\delta_{t}t}
\Big)=o\big(\frac1{t^{1+\gamma}}\big),$$
and from (4), we can get
$$P\Big( 0 < p_i < \frac{\omega_i}{\pi} \frac{m}{t},~1-\frac{\omega_j}{\pi} \frac{m}{\delta_{t}t} < p_j < 1
\Big)=o\big(\frac1{t^{1+\gamma}}\big).$$
3) Cuadras-Aug\'{e} Copula. The copula function is
$$\mathbb{C}(u_i,v_j) = \big(\min{ (u_i, v_j)}\big)^{\theta}(u_i v_j)^{1-\theta}, \theta \in [0,1], 1\leq i\neq j\leq m.$$
When $m$ is divergent and $m=o(t^{\gamma/2})$, let $\delta_{t}=t^{\gamma-1}$, where $\gamma \in (0,1]$.
From (5), we can get
$$P\Big( 0 < p_i < \frac{\omega_i}{\pi} \frac{m}{t},~0<p_j < \frac{\omega_j}{\pi} \frac{m}{\delta_{t}t}
\Big)=o\big(\frac1{t^{1+\gamma}}\big),$$
and from (6), we have
$$P\Big( 0 < p_i < \frac{\omega_i}{\pi} \frac{m}{t},~1-\frac{\omega_j}{\pi} \frac{m}{\delta_{t}t} < p_j < 1
\Big)=o\big(\frac1{t^{1+\gamma}}\big).$$
4) Normal Copula. The copula function is
$$\mathbb{C}(u_i,v_j) = \frac{1}{2\pi\sqrt{1-\rho_{ij}}}\int_{-\infty}^{\Phi^{-1}(u_i)}\int_{-\infty}^{\Phi^{-1}(v_j)}\exp\left(-\frac{x^2-2\rho_{ij}xy+y^2}{2(1-\rho^2_{ij})}\right)dxdy, 1\leq i\neq j\leq m.$$
When $m$ is divergent, $0\leq\rho_{ij}<\rho_{\max}<1$ and $m=o(t^{\gamma/2})$, we take $\delta_t t = t^{\beta}$ where ${(1+\rho_{max})}/{2} \leq \beta<1$, namely, $\delta_t={t^{\beta-1}} $, $\gamma={2}\beta/{(1+\rho_{\max})}-1(0<\gamma \leq 1)$. From (7), we have
$$P\Big( 0 < p_i < \frac{\omega_i}{\pi} \frac{m}{t},~0<p_j < \frac{\omega_j}{\pi} \frac{m}{\delta_{t}t}
\Big)=o\big(\frac1{t^{1+\gamma}}\big),$$
and from (8), we can get
$$P\Big( 0 < p_i < \frac{\omega_i}{\pi} \frac{m}{t},~0<p_j < \frac{\omega_j}{\pi} \frac{m}{\delta_{t}t}
\Big)=o\big(\frac1{t^{1+\gamma}}\big).$$
5) Ali-Mikhail-Haq (AMH) Copula.
(i) When $\theta \in [0,1]$, the copula function is $$\mathbb{C}(u_i,v_j) = \frac{u_iv_j}{1-\theta(1-u_i)(1-v_j)},~1\leq i\neq j\leq m.$$
When $m$ is divergent and $m=o(t^{\gamma/2})$, let $\delta_{t}=t^{\gamma-1}$, where $\gamma \in (0,1]$. From (9), we have
$$P\Big( 0 < p_i < \frac{\omega_i}{\pi} \frac{m}{t},~0<p_j < \frac{\omega_j}{\pi} \frac{m}{\delta_{t}t}
\Big)=o\big(\frac1{t^{1+\gamma}}\big),$$
and from (10), we can get
$$P\Big( 0 < p_i < \frac{\omega_i}{\pi} \frac{m}{t},~0<p_j < \frac{\omega_j}{\pi} \frac{m}{\delta_{t}t}
\Big)=o\big(\frac1{t^{1+\gamma}}\big).$$

\noindent(ii) When $\theta \in [-1,0]$, the copula function is
 $$\mathbb{C}(u_i,v_j) = \frac{u_iv_j}{1-\theta(1-u_i)(1-v_j)},~ 1\leq i\neq j\leq m. $$
When $m$ is divergent and $m=o(t^{\gamma/2})$, let $\delta_{t}=t^{\gamma-1}$, where $\gamma \in (0,1]$. From (11), we can get
$$P\Big( 0 < p_i < \frac{\omega_i}{\pi} \frac{m}{t},~0<p_j < \frac{\omega_j}{\pi} \frac{m}{\delta_{t}t}
\Big)=o\big(\frac1{t^{1+\gamma}}\big),$$
 from (12), we have
$$P\Big( 0 < p_i < \frac{\omega_i}{\pi} \frac{m}{t},~0<p_j < \frac{\omega_j}{\pi} \frac{m}{\delta_{t}t}
\Big)=o\big(\frac1{t^{1+\gamma}}\big).$$
6) Survival Copula: The copula function is
$$\mathbb{C}(u_i,v_j) = {u_iv_j}\exp{(-\theta \ln u_i \ln v_j)},~ \theta \in [0,1],~ 1\leq i\neq j\leq m. $$
When $m$ is divergent and $m=o(t^{\gamma/2})$, let $\delta_{t}=t^{\gamma-1}$, where $\gamma \in (0,1]$.
From (13), we have
$$P\Big( 0 < p_i < \frac{\omega_i}{\pi} \frac{m}{t},~0<p_j < \frac{\omega_j}{\pi} \frac{m}{\delta_{t}t}
\Big)=o\big(\frac1{t^{1+\gamma}}\big),$$
from (14), we can get
$$P\Big( 0 < p_i < \frac{\omega_i}{\pi} \frac{m}{t},~0<p_j < \frac{\omega_j}{\pi} \frac{m}{\delta_{t}t}
\Big)=o\big(\frac1{t^{1+\gamma}}\big).$$

\end{proof}

}

\begin{proof}[Proof of Corollary 1]
\noindent  Let $x$ be an arbitrary positive real number and
$$c_m(x) = [2-(\log{m})^{-1}] x + \Big(2\log {m}-[\log{(\log{m})} + \log(4\pi) - \log 4] +
\frac{\log{(\log{m})}+\log(4\pi)- \log 4}{2 \log{m}}\Big).$$
Let $Y_1, ...,Y_p$ be {\em iid} and follow $ N(0,1)$. Under (D3), we have to prove that
$$\left|P\left([\max\limits_{s=1, ...,m}(|Z_s|)]^2 < c_m(x)\right)-P\left([\max\limits_{s=1, ...,m}(|Y_s|)]^2 < c_m(x)\right)\right|\rightarrow 0,~~as ~~m \rightarrow  \infty . $$
Let $$\phi(x,y,r)=\frac{1}{2\pi\sqrt{1-r}}\exp\left(-\frac{x^2-2rxy+y^2}{2(1-r^2)}\right).$$ According to the inequality in Berman  \citep{r5}, 
 we have
\begin{align*}
&\left|P\left([\max\limits_{s=1, ...,m}(|Z_s|)]^2 < c_m(x)\right)-P\left([\max\limits_{s=1, ...,m}(|Y_s|)]^2 < c_m(x)\right)\right|\\
& \geq 2\sum_{1\leq q<s\leq m }|\rho_{j,k}|\phi(\sqrt{c_m(x)}~,\sqrt{c_m(x)}~;|\rho_{q,s}|)
+
2\sum_{1\leq q<s\leq m }|\rho_{j,k}|\phi(\sqrt{c_m(x)}~,-\sqrt{c_m(x)}~;|\rho_{q,s}|).
\end{align*}
Let $c_A$ be a generic finite positive constant. Note that $c_m(x) = 2 \log m - \log(\log m) + o_p(1)$, we have
\begin{align*}
2\sum_{1\leq q<s\leq m }|\rho_{q,s}|\phi(\sqrt{c_m(x)}~,\sqrt{c_m(x)}~;|\rho_{q,s}|)
&\leq
c_A\sum_{1\leq q<s\leq m }|\rho_{q,s}|e^{\frac{-c_m(x)}{1+|\rho_{q,s}|}}\\
&\leq
c_A~\log m \sum_{1\leq q<s\leq m }|\rho_{q,s}| m^{-\frac{2}{1+|\rho_{q,s}|}}.
\end{align*}
Using the technique in \citep{r4}, 
the left expression of the above inequation is
\begin{align*}
2\sum_{1\leq q<s\leq m}|\rho_{q,s}|&\phi(\sqrt{c_m(x)}~,-\sqrt{c_m(x)}~;|\rho_{q,s}|)\leq c_A\sum_{1\leq q<s\leq m }|\rho_{q,s}|e^{\frac{-c_m(x)}{1-|\rho_{q,s}|}}\\
&\leq c_A\sum_{1\leq q<s\leq m }|\rho_{q,s}|e^{-c_m(x)}\leq c_A (\log m)m^{-2}\sum_{i=1}^{m-1}(m-i)\delta_i,
\end{align*}
and the right one goes to 0 as $m\rightarrow \infty$.
Under (D3), $\delta_m(\log m)^{2+\beta}\rightarrow 0$, as $m \rightarrow \infty$, we have
$$
c_A (\log m)m^{-2}\sum_{i=1}^{m-1}[(m-i)\delta_i] \leq c_A (\log m)m^{-1}\sum_{i=1}^{m-1}\delta_i\rightarrow 0,
$$
which means
$$\left|P\left([\max\limits_{s=1, ...,m}(|Z_s|)]^2 < c_m(x)\right)-P\left([\max\limits_{s=1, ...,m}(|Y_s|)]^2 < c_m(x)\right)\right|\rightarrow 0,~~as ~~m \rightarrow  \infty . $$

\end{proof}

\begin{proof}[Proof of Theorem 5]
\noindent First we will decompose $$\text{CCT} = \sum\limits_{j  = 1}^{m}\omega_j\tan\Big[\Big(0.5-2\big(1-\Phi(|Z_j|)\big)\Big)\pi\Big]$$ to three portions.
  Assume that $\min\limits_{i=1, ...,m}\omega_i ={c_0}/{m}$, where the constant $c_0 > 0$ and according to the definition of the CCT, we can figure out the conclusion below
\begin{align*}
\text{CCT}
&= \sum\limits_{j \in \Omega}\omega_j\tan\Big[\Big(0.5-2\big(1-\Phi(|Z_j|)\big)\Big)\pi\Big] + \sum\limits_{j \in \Omega^c}\omega_j \tan\Big[\Big(0.5-2\big(1-\Phi(|Z_j|)\big)\Big)\pi\Big]\\
&\geq \sum\limits_{j \in \Omega^c}\omega_j \tan\Big[\Big(0.5-2\big(1-\Phi(|Z_j|)\big)\Big)\pi\Big] +  \frac{m^\eta-c_0}{m}\tan\Big[\Big(0.5-2\big(1-\Phi(\min\limits_{j \in \Omega}|Z_j|)\big)\Big)\pi\Big]\\
&+\frac{c_0}{m}\tan\Big[\Big(0.5-2\big(1-\Phi(\max\limits_{j \in \Omega}|Z_j|)\big)\Big)\pi\Big]\\
& \widehat=~ \text{CCT}_1 + \text{CCT}_2 + \text{CCT}_3
\end{align*}
According to {Theorem 1}, we can deduce that the CCT$_1$ approximates a standard Cauchy distribution as $m$ goes to infinity, which indicates that
  $$\text{CCT}_1 =O_p(1).$$
Based on Theorem 3 of Liu and Xie \citep{r22}, we can get
  $$\text{CCT}_2  \geq o_p(1).$$
As $m$ goes to infinity, we have
$$ \text{CCT} \geq \text{CCT}_3 +  O_p(1). $$
The power of  the CCT is
\begin{align*}
\beta_{\text{CCT}}
&=P\left(\text{CCT}>C_\alpha\right)\\
&\geq P\left(\frac{c_0}{m}\tan\Big[\Big(0.5-2\big(1-\Phi(\max\limits_{j \in \Omega}|Z_j|)\big)\Big)\pi\Big] +  O_p(1)>C_\alpha  \right)\\
&=P\left(\tan\Big[\Big(0.5-2\big(1-\Phi(\max\limits_{j \in \Omega}|Z_j|)\big)\Big)\pi\Big]  > \big(C_\alpha - O_p(1)\big)m/c_0 \right)\\
&=P\left(\max\limits_{s\in \Omega} | Z_j |> \Phi^{-1}\Big[1-\frac{1}{2\pi}\arctan\frac{c_0}{\big(C_\alpha - O_p(1)\big)m} \Big] \right)\\
&=P\left(\max\limits_{s\in \Omega} | Z_j |> (2\log m)^{1/2}-\frac{\log \log m +\log(4\pi)}{(8\log m)^{1/2}} +\frac{\log [2\pi\big(C_\alpha-O_p(1)\big)]}{(2\log m)^{1/2}} +O(\frac{1}{\log m})  \right),
\end{align*}
where the last equation is based on (iii) of Lemma \ref{le2} .
Second, we divide the power of MAX to two parts
\begin{align*}
\beta_{\text{MAX}}
&=P\left(\frac{\sqrt{\text{MAX}}-a_m - O\Big({(\log m)^{-1} }\Big)}{b_m}>G_\alpha\right)\\
&=P\left(\sqrt{\text{MAX}}>G_\alpha b_m + a_m \right)\\
&=P\left(\max\limits_{j=1, ...,m} | Z_j |>G_\alpha b_m + a_m +  O\Big({(\log m)^{-1} }\Big) \right)\\
&=P\left(\max\{\max\limits_{j\in \Omega}| Z_j |,\max\limits_{j\in \Omega^c} | Z_j |\}>G_\alpha b_m + a_m  + O\Big({(\log m)^{-1} }\Big) \right)\\
&\leq P\left(\max\limits_{j\in \Omega} | Z_j |>G_\alpha b_m + a_m + O\Big({(\log m)^{-1} }\Big) \right) + P\left(\max\limits_{j\in \Omega^c}| Z_j |>G_\alpha b_m + a_m + O\Big({(\log m)^{-1} }\Big)  \right)\\
& \widehat=~ \beta_{\text{MAX}_1} + \beta_{\text{MAX}_2},
\end{align*}
where $a_m=(2\log m)^{1/2}-{(\log \log m +4\pi -4)}{(8\log m)^{-1/2}}$ and $b_m=(2 \log m)^{-1/2}$.
Then, we will prove that $\beta_{\text{MAX}_2} =P\left(\max\limits_{j\in \Omega^c}| Z_j |>G_\alpha b_m + a_m + O\Big({(\log m)^{-1} }\Big)\right) = o(1)$.
By simple transformation, we have
\begin{align*}
\beta_{\text{MAX}_2}
&=P\left(\frac{\max\limits_{j\in \Omega^c}| Z_j |-a_{m^\star}}{b_{m^\star}}>\frac{G_\alpha b_m + a_m + O\left(\frac{1}{\log m }\right)-a_{m^\star}}{b_{m^\star}}\right),
\end{align*}
where $m^\star = m^{\kappa}$, $b_{m^\star}=(2 \log m^\star)^{-1/2}$ and $a_{m^\star}=(2\log m^\star)^{1/2}-{(\log \log m^\star +4\pi -4)}{(8\log m^\star)^{-1/2}}$.
According to {Lemma \ref{le1}}, we have
\begin{align*}
\frac{a_m-a_{m^\star}}{b_{m^\star}}
&\geq 2[\sqrt{\log(m)\log(m^\star)}-\log(m^\star)].
\end{align*}
Due to $1>\kappa>0$, as both $m^{\star}$ and $m$ go to $\infty$, we have
\begin{align*}
\frac{a_m-a_{m^\star}}{b_{m^\star}}
&\geq 2\left[\sqrt{{\frac{1}{\kappa}}\log\left({m^\star}\right)\log(m^\star)}-\log(m^\star)\right]=2\left[\big(\sqrt{{\frac{1}{\kappa}}-1} \big)\log(m^\star)\right]\rightarrow \infty
\end{align*}
and
\begin{align*}
\frac{G_\alpha b_m }{b_{m^\star}}={{( \log m^\star)^{1/2}} \over {(\log m)^{1/2}}}\rightarrow 0,
\end{align*}
which means that
 \begin{align*}
 P\left(\max\limits_{j\in \Omega^c}| Z_j |>G_\alpha b_m + a_m + O\Big({(\log m)^{-1} }\Big)  \right) &\rightarrow 1-\exp\big(\exp(-\infty)\big) \text{, as }m^{\star} \text{ and } m \rightarrow \infty\\
&=o(1).
\end{align*}
Therefore, we have
$$\beta_{\text{MAX}} \leq  \beta_{\text{MAX}_1} + o(1).
$$
where $$\beta_{\text{MAX}_1} = P\left(\max\limits_{j\in \Omega} | Z_j |> (2\log m)^{1/2}-\frac{\log \log m +\log(4\pi) -\log(4)}{(8\log m)^{1/2}} + \frac{G_\alpha}{(2\log m)^{1/2}} + O\Big({(\log m)^{-1} }\Big) \right).$$
According to the above inequalities, as $m$ goes infinity, we can draw the conclusion
$$ \beta_{\text{MAX}}  \leq \beta_{\text{CCT}} + o(1) .$$

\end{proof}

\section*{Simulation results for Models 2 and 4}

Figures S1 and S2 display the tail probabilities of $P(\text{SCD}>t)$ (dotted line)  and $P(\text{CCT}>t)$ (solid line)  for Models 2 and 4, respectively, where  $P(\text{CCT}>t)$ is calculated based on 500,000 Monte Carlo samples.
Both figures indicate  that the standard Cauchy distribution can well approximate the tail probability of the CCT since both line are almost in coincidence.

\begin{figure}[htbp]
\includegraphics[width=15cm,height=12cm]{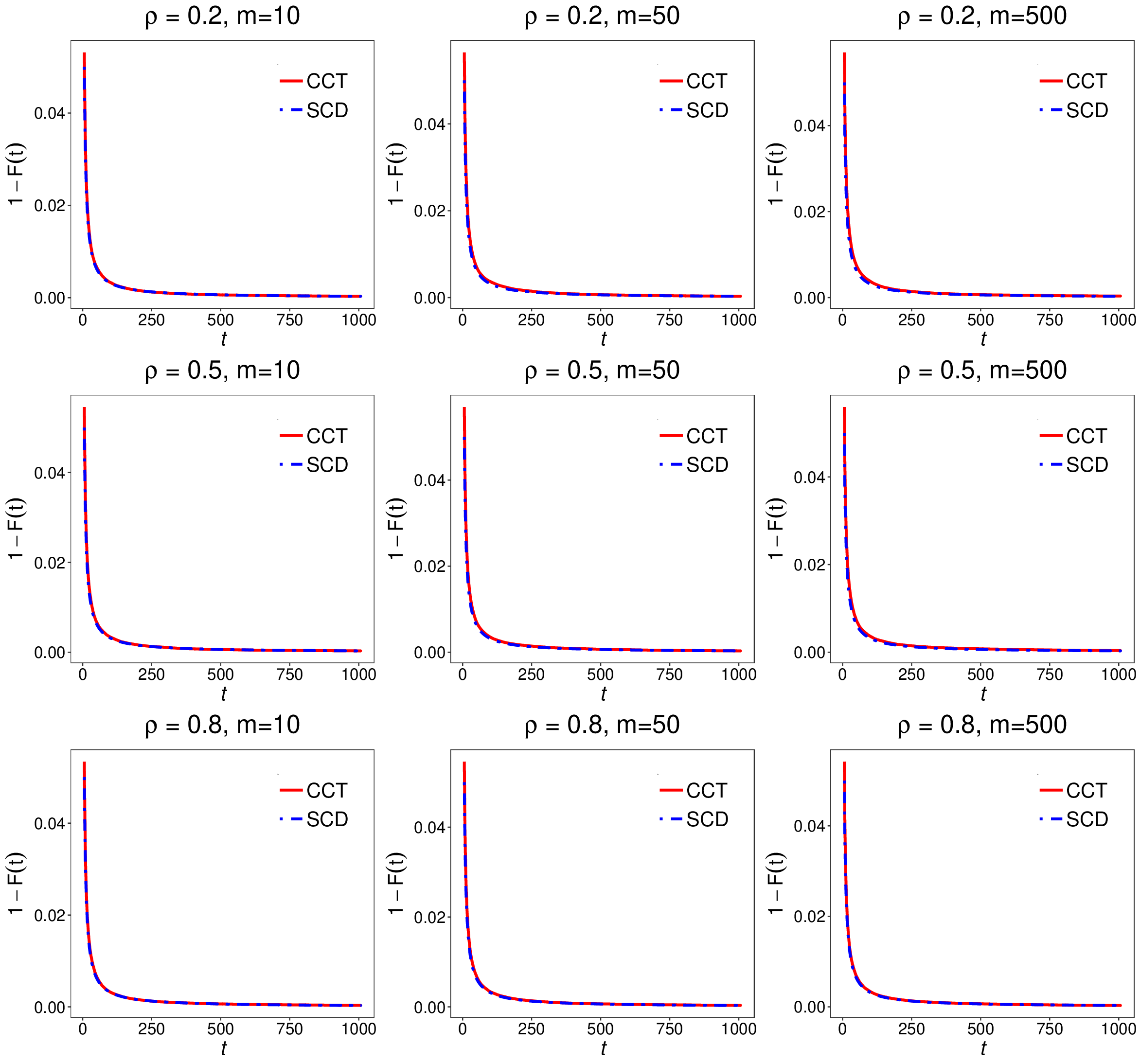}\\
{\small {\bf Figure S1.} The tail probability of the CCT (red solid line) and the SCD (blue dotted line), where test statistics $Z_{i}, i=1, ...,m$ are generated from
an $m$-variate $t$ distribution, with parameters given in Model 2. The vertical axis is the tail probability $1-F(t)$ for a cumulative distribution function $F$.}
\end{figure}

\begin{figure}[htbp]
\includegraphics[width=15cm,height=12cm]{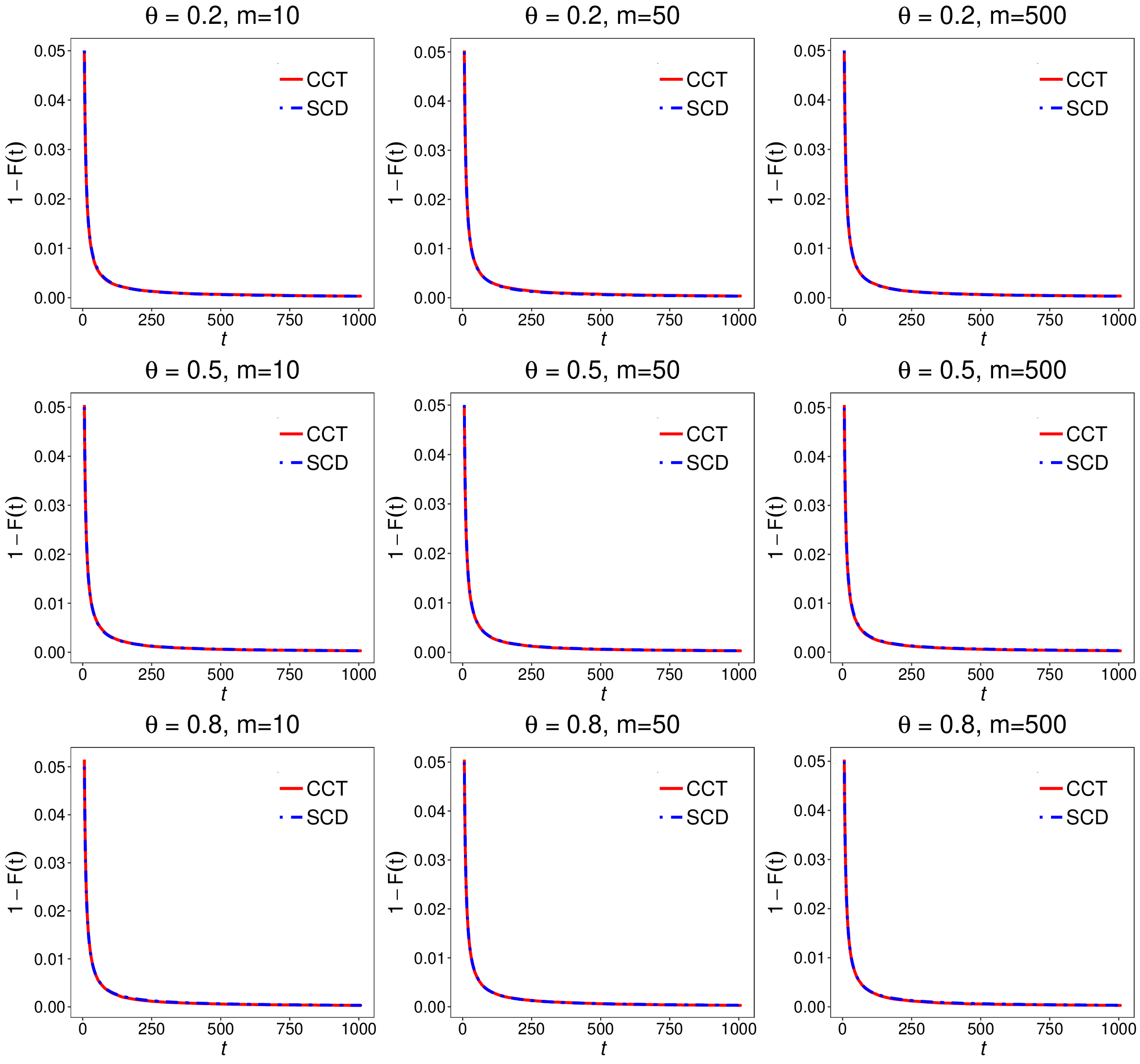}\\
{\small {\bf Figure S2.} The tail probability of the CCT (red solid line) and the SCD (blue dotted line), where test statistics $Z_{i}, i=1, ...,m$ are generated from
an $m$-variate $t$ distribution, with parameters given in Model 4. The  vertical axis is the tail probability $1-F(t)$ for a cumulative distribution function $F$.}
\end{figure}

\newpage

\section*{R codes}

\begin{lstlisting}[language=R]
##########-----------  Calculate CCT's P-values -------------------
CCT.cal.p <- function(pvalues)
{
    testvalue = mean(tan((0.5-pvalues)*pi))
    CCT.p = 1 - pcauchy(testvalue)
    CCT.p
}
##########----------        Model       1   -----   P = 10   ------
ourcauchyt1 = function(dataz,x,sigma)
{
  n=nrow(dataz)
  testvalue=rep(0,n)
  varmy = diag(sigma)
  for (i in 1:n)
  {
    testvalue[i]=mean(tan((0.5-2*(1-pt(abs(dataz[i,]/(sqrt(varmy))),
	df =10)))*pi))
  }
  nx=length(x)
  y2=rep(0,nx)
  for (j in 1:nx) {
    y2[j] = sum(testvalue>x[j])
  }
  y2 = y2/n
  y2
}
getcov1 = function(p,pho)
{
  c=matrix(NA,p,p)
  for (i in 1:p)
  {
    for(j in 1:p)
    {
      c[i,j] = pho^(abs(i-j))
    }
  }
  c
}
Spikedmodel1 = function(m,d)
{
  vectors=eigen(getcov1(m,0.8))$vectors
  lamda=rep(0,m)
  for (i in 1:d) {
    lamda[i]=m/(3^i)
  }
  for (i in (d+1):m) {
    lamda[i]=1
  }
  skipematrix=matrix(0,m,m)
  for (i in 1:m) {
    skipematrix=vectors[i,]%*%t(vectors[i,])*(lamda[i])+skipematrix
  }
  skipematrix
}
library(mvtnorm)
library(ggplot2)
library(lubridate)
library(ggpubr)
n = 500000
q = qcauchy(0.95,location = 0,scale = 1)
x = seq(q, q+1000, by = 0.1)
y1 = 1-pcauchy(x, location= 0, scale = 1)
m = 10
sigmamy = Spikedmodel1(m,4)
dataz = rmvt(n, sigma = sigmamy,df = 10 , delta = rep(0,m))
y2 = ourcauchyt1(dataz,x,sigmamy)
t  = x
Test = rep(c('SCD','CCT'),each = length(x))
Power = c(y1,y2)
df = data.frame(t = t, Test = Test, Power = Power)
p1 = ggplot(data = df, mapping = aes(x = t, y = Power,labels = Test,
  linetype = Test, colour = Test,  fill = Test))+
  scale_fill_manual(values=c("CCT" = "red", "SCD" = "blue"))+
  scale_colour_manual(values=c("CCT" = "red", "SCD" = "blue" ))+
  scale_linetype_manual(values = c("CCT" = "solid", "SCD" = "dotdash"))+
  geom_line(lwd = 1.3)+
  theme_bw()+
  theme(panel.grid.major = element_blank(),
        panel.grid.minor = element_blank(),
        legend.position = c(0.8,0.8),legend.key.size = unit(1,'cm')) +
  labs(title=expression(paste("m=10, d=4")))  +
  theme(plot.title = element_text(hjust = 0.5, size = 25))+
  theme(axis.text.x = element_text( color="black", size=16))+
  theme(axis.text.y = element_text( color="black", size=16))+
  theme(axis.title.x = element_text( face = "italic",
                                     color="black", size=20))+
  theme(axis.title.y = element_text( face = "italic",
                                     color="black", size=20))+
  guides(fill = guide_legend(title = waiver(),
                             title.theme = element_text(size=1),
                             label.theme = element_text(size = 20)))
  labs(y = quote(1-F(t)))
p1 = p1+ theme(legend.title=element_blank())
#--- Similarly, we can get pictures p2, p3, p4, p5, p6, p7, p8, p9
#--- with m = 50, 500 and d = 5, 6

##########----------       Air Quality Analysis -----------
rm(list=ls())
library(mvtnorm)
library(GBJ)
data0 = as.matrix(read.csv(file="data.csv", header=F, stringsAsFactors=F))
X = data0[-1,c(3:12)]
X = X[1:9357,]
X = matrix(as.numeric(as.character(X)),9357,10)
Y = data0[-1,c(13:15)]
Y = Y[1:9357,]
Y = matrix(as.numeric(as.character(Y)),9357,3)
n = nrow(X)
m = ncol(X)
miss = union(which(Y[,1] == -200),which(Y[,2] == -200))
miss = union(miss,which(Y[,3] == -200))
miss = union(miss,which(X[,1] == -200))
miss = union(miss,which(X[,2] == -200))
miss = union(miss,which(X[,3] == -200))
miss = union(miss,which(X[,4] == -200))
miss = union(miss,which(X[,5] == -200))
miss = union(miss,which(X[,6] == -200))
miss = union(miss,which(X[,7] == -200))
miss = union(miss,which(X[,8] == -200))
miss = union(miss,which(X[,9] == -200))
miss = union(miss,which(X[,10] == -200))
Y = Y[-miss,]
X = X[-miss,]
m = ncol(X)
pvalue = rep(0,m)
for (i in 1:m) {
  fit <- lm(X[,i]~ Y)
  pvalue[i]   = anova(fit)$`Pr(>F)`[1]
}
MIN.obs= min(pvalue)
CCT.obs= mean(tan((0.5-pvalue)*pi))
CCTpvalue=1-pcauchy(CCT.obs)
pvalue.null = matrix(0,10^6,m)
for (i in 1:10^6) {
  for (j in 1:m) {
    fit <- lm(X[sample(s,s),j]~ Y)
    pvalue.null[i,j]   = anova(fit)$`Pr(>F)`[1]
  }
}
 null.p = apply(pvalue.null, 1, min)
 MINpvalue = sum(null.p < MIN.obs)/10^6
 cormatrix = matrix(0,10^{6},m)
 zvalue = rep(0,m)
 for (i in 1:10^{6}) {
   for (j in 1:m) {
     fit <- lm(X[sample(s,s),j]~ Y)
     cormatrix[i,j]   = anova(fit)$`F value`[1]
   }
 }
 mymean = apply(cormatrix, 2, mean)
 mysd =  apply(cormatrix, 2, sd)
 cormatrix = scale(cormatrix)
 mycor = cor(cormatrix)
   for (j in 1:m) {
     fit <- lm(X[,j]~ Y)
     zvalue[j]=(anova(fit)$`F value`[1]-mymean[j])/mysd[j]
   }
 GHCpvalue = GHC(zvalue,cor_mat = mycor)$GHC_pvalue
 GBJpvalue = GBJ(zvalue,cor_mat = mycor)$GBJ_pvalue
 c(CCTpvalue, MINpvalue, GHCpvalue, GBJpvalue)
\end{lstlisting}

\end{document}